\newcommand{\negA}{\vspace{-0.05in}}
\newcommand{\negB}{\vspace{-0.1in}}
\newcommand{\negC}{\vspace{-0.18in}}
\newcommand{\mysection}[1]{\negC\section{#1}\negA}
\newcommand{\mysubsection}[1]{\negB\subsection{#1}\negA}
\newcommand{\myparagraph}[1]{\par\smallskip\par\noindent{\bf{}#1:~}}
\newcommand{\alg}[1]{\mbox{\sf #1}}  
\newcommand{\comment}[1]{}
\begin{document}

\mainmatter

\title{Parameterized Algorithms for
Graph Partitioning Problems}


\author{Hadas Shachnai \and Meirav Zehavi}

\institute{Department of Computer Science, Technion, Haifa 32000, Israel\\
\mails}

\maketitle

\vspace{-1em}

\begin{abstract}
We study a broad class of graph partitioning problems,
where each problem
 is specified by a
graph $G\!=\!(V,\!E)$, and parameters $k$ and $p$. We seek a subset $U\!\subseteq\!V$ of size $k$, such
that $\alpha_1m_1 \!+\! \alpha_2m_2$ is at most (or at least) $p$,
 where $\alpha_1,\!\alpha_2\!\in\!\mathbb{R}$ are constants defining the problem, and
$m_1, m_2$ are the cardinalities of the edge sets having both endpoints, and exactly one endpoint,  in $U$, respectively.
This class of {\em fixed cardinality graph partitioning problems (FGPP)}
encompasses 
 {\sc Max $(k,\!n\!-\!k)$-Cut}, {\sc Min $k$-Vertex Cover}, 
{\sc $k$-Densest Subgraph}, and {\sc $k$-Sparsest Subgraph}.

Our main result is an $O^*\!(4^{k+o(k)}\!\Delta^k)$ algorithm for any problem in this class, 
where $\Delta \!  \geq  \! 1$ is the maximum degree in the input graph.
This resolves an open
question posed by Bonnet et al. [IPEC 2013]. 
We obtain faster algorithms for certain subclasses of FGPPs,
parameterized by $p$, or by $(k+p)$.
In particular, we give an  $O^*\!(4^{p+o(p)})$ time algorithm for {\sc Max $(k,n\!-\!k)$-Cut},
thus improving significantly
the best known $O^*\!(p^p)$ time algorithm.
\comment{
Given a graph $G\!=\!(V,\!E)$, and parameters $k$ and $p$, a {\em Local Graph Partitioning Problem (LGPP)}
 asks if there is a subset $U\!\subseteq\!V$ of size $k$, such that $\alpha_1m_1 \!+\! \alpha_2m_2$ is at most (or at least) $p$,
 where $\alpha_1,\!\alpha_2\!\in\!\mathbb{R}$ are constants defining the problem, $m_1$ is the number of edges
in $E$ having both endpoints in $U$, and $m_2$ is the number of edges in $E$ having exactly one endpoint in $U$.
Bonnet et al. [IPEC 2013] posed as an open question the existence of constants $a$ and $b$ such that {\em any}
LGPP can be solved in time $O^*\!(a^k\!\Delta^{bk})$, where $\Delta$ is the maximum degree of $G$.
In this paper, we answer this question affirmatively by developing an $O^*\!(4^{k+o(k)}\!\Delta^k)$
 time algorithm for any LGPP. For certain subfamilies of LGPPs, we also develop algorithms parameterized by
$p$ and $(k+p)$. In particular, we solve a maximization LGPP called {\sc Max $(k,n\!-\!k)$-Cut} in time
$O^*\!(4^{p+o(p)})$, significantly improving the previous $O^*\!(p^p)$ time.
 Furthermore, we show that any minimization LGPP satisfying $\alpha_1\!\geq\!0$ and $\alpha_2\!>\!0$ can be solved
in time $O^*\!(2^{k+\frac{p}{\alpha_2}+o(k+p)})$, and develop a faster algorithm for such LGPPs
satisfying also $\alpha_2\!\geq\!\frac{\alpha_1}{2}$. This yields an $O^*\!(2^{p+o(p)})$ time algorithm
 for a minimization LGPP called {\sc Min $k$-Vertex Cover}, improving the previous {\em randomized} $O^*\!(3^p)$ time.
}
\end{abstract}

\vspace{-0.8em}

\mysection{Introduction}

\vspace{-0.1em}

Graph partitioning problems arise in many areas including VLSI design, data mining, parallel computing,
and sparse matrix factorizations (see, e.g., \cite{B06,KLM11,DRLJ10}).
We study a broad class of graph partitioning problems, where each problem
 is specified by a
graph $G\!=\!(V,\!E)$, and parameters $k$ and $p$. We seek a subset $U\!\subseteq\!V$ of size $k$, such
that $\alpha_1m_1 \!+\! \alpha_2m_2$ is at most (or at least) $p$,
 where $\alpha_1,\!\alpha_2\!\in\!\mathbb{R}$ are constants defining the problem, and
$m_1, m_2$ are the cardinalities of the edge sets having both endpoints, and exactly one endpoint,  in $U$, respectively.
This class encompasses such fundamental 
problems as {\sc Max} and {\sc Min $(k,\!n\!-\!k)$-Cut}, {\sc Max} and {\sc Min $k$-Vertex Cover}, 
{\sc $k$-Densest Subgraph}, and {\sc $k$-Sparsest Subgraph}.
For example, {\sc Max $(k,\!n\!-\!k)$-Cut} is a max-FGPP 
(i.e., maximization FGPP) 
satisfying $\alpha_1\!=\!0$ 
and $\alpha_2\!=\!1$, {\sc Min $(k,\!n\!-\!k)$-Cut} is a min-FGPP 
(i.e., minimization FGPP) 
satisfying $\alpha_1\!=\!0$
 and $\alpha_2\!=\!1$, and {\sc Min $k$-Vertex Cover} is a min-FGPP satisfying $\alpha_1\!=\!\alpha_2\!=\!1$.

A parameterized algorithm with parameter $k$ has running time $O^*\!(\!f(k))$ for some
 function $f$, where $O^*$ hides factors polynomial in the input size. In this paper, we develop a parameterized 
algorithm with parameter $(k+\Delta)$ for the class of all FGPPs, where $\Delta \! \geq \! 1$ is the maximum degree in the graph 
$G$. For certain subclasses of FGPPs, we develop algorithms parameterized by $p$, or by $(k+p)$.

\myparagraph{Related Work}Parameterized by $k$, {\sc Max} and {\sc Min $(\!k,\!n\!-\!k\!)$-Cut}, and {\sc Max} and
{\noindent {\sc Min $k$-Vertex Cover} are W[1]-hard \cite{downey03,cai08,vcvariants}. Moreover, 
{\sc $k$-Clique} and {\sc $k$-Independent Set}, two well-known W[1]-hard problems \cite{downey95}, 
are special cases of {\sc $k$-Densest Subgraph} where $p\!=\!k(k\!-\!1)$, and {\sc $k$-Sparsest Subgraph}
 where $p\!=\!0$, respectively. Therefore, parameterized by $(k\!+\!p)$, {\sc $k$-Densest Subgraph} and {\sc $k$-Sparsest Subgraph} are W[1]-hard.
 Cai et al. \cite{randsep} and Bonnet et al. \cite{localpartition} studied the parameterized complexity of FGPPs with respect to $(k\!+\!\Delta)$. Cai et al. \cite{randsep} gave $O^*\!(2^{(k\!+\!1)\Delta})$ time algorithms for {\sc $k$-Densest Subgraph} and 
{\sc $k$-Sparsest Subgraph}. Recently, Bonnet et al. \cite{localpartition} presented an $O^*\!(\Delta^k)$ time algorithm for 
{\em degrading} FGPPs. This subclass includes max-FGPPs in which $\alpha_1\!/2\!\leq\! \alpha_2$, and min-FGPPs 
in which $\alpha_1\!/2\!\geq\! \alpha_2$.\footnote{A max-FGPP (min-FGPP) is non-degrading if $\alpha_1/2\geq \alpha_2$ ($\alpha_1/2\leq \alpha_2$).} 
They also proposed  an $O^*\!(k^{2k}\Delta^{2k})$ time algorithm for all FGPPs, 
and posed as an open question the existence of constants $a$ and $b$ such that any FGPP can
 be solved in time $O^*\!(a^k\!\Delta^{bk})$. In this paper we answer this question affirmatively, by
developing an $O^*\!(4^{k+o(k)}\!\Delta^k\!)$ time algorithm for any FGPP.}

Parameterized by $p$, {\sc Max} and {\sc Min $k$-Vertex Cover} can be solved~in~times $O^*\!(1.396^p)$ and $O^*\!(4^p)$, respectively, and in randomized times $O^*\!(1.2993^p)$ and $O^*\!(3^p)$, respectively \cite{partialvc}. Moreover, {\sc Max $(k,\!n\!-\!k)$ Cut} can be solved in time $O^*\!(p^p)$ \cite{localpartition}, and {\sc Min $(k,\!n\!-\!k)$ Cut} can be solved in time $O(2^{O(p^3)})$ \cite{bisection}. Parameterized by $(k\!+\!p)$, {\sc Min $(k,\!n\!-\!k)$ Cut} can be solved in time $O^*\!(k^{2k}(k\!+\!p)^{2k}\!)$~\cite{localpartition}.

We note that the parameterized complexity of FGPPs has also been studied with respect to other parameters, such as the treewidth and the vertex cover number of $G$ (see, e.g., \cite{kloks94,kdensest,localpartition}).

\myparagraph{Contribution}Our main result  is an $O^*\!(4^{k+o(k)}\!\Delta^k\!)$ time 
algorithm for~the~class of all FGPPs, answering affirmatively the question posed by Bonnet~et~al.~\cite{localpartition}
(see Section \ref{section:res1}).
In Section \ref{section:res2a}, we develop an $O^*\!(4^{p+o(p)})$ time algorithm for {\sc Max $(k,\!n\!-\!k)$-Cut}, 
that significantly improves the $O^*\!(p^p)$ running time obtained in \cite{localpartition}. 
We also obtain (in Section \ref{section:res2b}) an $O^*\!(2^{k+\frac{p}{\alpha_2}+o(k+p)})$ time algorithm
 for the subclass of {\em positive} min-FGPPs, in which $\alpha_1\!\geq\!0$ and $\alpha_2\!>\!0$.  Finally, we develop (in Section \ref{section:res3}) a faster algorithm for non-degarding positive min-FGPPs (i.e., min-FGPPs satisfying $\alpha_2\!\geq\!\frac{\alpha_1}{2}\!>\!0$). In particular, we thus solve {\sc Min $k$-Vertex Cover} in time $O^*\!(2^{p+o(p)})$, 
improving the previous {\em randomized} $O^*\!(3^p)$ time algorithm.

\myparagraph{Techniques}
We obtain our main result by establishing 
an interesting reduction from non-degrading FGPPs to
 the {\sc Weighted $k'$-Exact Cover ($k'$-WEC)} problem (see Section \ref{section:res1}).
 Building on this reduction, combined with an algorithm for degrading FGPPs given in \cite{localpartition}, and 
an algorithm for {\sc $k'$-WEC} given in \cite{corrmatchpack}, we develop an algorithm for any FGPP. 
To improve the running time of our algorithm, we use a fast
construction of representative families~\cite{representative,corrrepresentative}.

In designing algorithms for FGPPs, parameterized by $p$ or $(k+p)$, we use as a key tool 
{\em randomized separation} \cite{randsep} (see Sections \ref{section:res2a}--\ref{section:res3}). 
Roughly speaking, randomized separation finds a `good' partition of the nodes in the input graph $G$ via randomized
coloring of the nodes in {\em red} or {\em blue}.
 If a solution exists, then, with some positive probability, there is a set $X$ of {\em only} red nodes that is a solution,
 such that {\em all} the neighbors of nodes in $X$ that are outside $X$ are blue. 
Our algorithm for {\sc Max $(k,\!n\!-\!k)$-Cut} makes non-standard use of randomized separation,
 in requiring that only {\em some} of the neighbors outside $X$ of nodes in $X$ are blue. 
This yields the desired improvement in the running time of our algorithm.

Our algorithm for non-degrading positive FGPPs is based on a somewhat different application of randomized separation,
in which we randomly color {\em edges} rather than the nodes. If a solution exists, then, with some positive probability, there is a node-set $X$ that is a solution, such that {\em some} edges between nodes in $X$ are red, and {\em all} edges between nodes in $X$ and nodes outside $X$ are blue. In particular, we require
that the subgraph induced by $X$, and the subgraph induced by $X$ from which we delete all blue edges, 
contain the same connected components.
We derandomize our algorithms using universal sets \cite{splitter}.

\myparagraph{Notation}Given a graph $G\!=\!(V,\!E)$ and a subset $X\!\subseteq\! V$, let $E(X)$ denote the set of edges in $E$ having both endpoints in $X$, and let $E(X,\!V\!\setminus\! X)$ denote the set of edges in $E$ having exactly one endpoint in $X$. Moreover, given a subset $X\!\subseteq\!V$, let $\mathrm{val}(X)\!=\! \alpha_1|E(X)|\!+\!\alpha_2|E(X,\!V\!\setminus\!X)|$.

\mysection{Solving FGPPs in Time $O^*(4^{k+o(k)}\Delta^k)$}\label{section:res1}

\vspace{-0.6em}

In this section we develop an $O^*\!(4^{k+o(k)}\Delta^k)$ time algorithm for the class of all FGPPs.
We use the following steps. In Section \ref{section:reduce} we show  that any non-degrading FGPP 
can be reduced to the {\sc Weighted $k'$-Exact Cover ($k'$-WEC)} problem, where $k'=k$. Applying this reduction, we then
 show (in Section \ref{section:decrease}) how to decrease the size of instances of {\sc $k'$-WEC}, by using representative families. Finally, we show (in Section \ref{section:algorithm}) how to solve any FGPP by using the results in Sections \ref{section:reduce} and \ref{section:decrease}, an algorithm for {\sc $k'$-WEC}, and an algorithm for degrading FGPPs given in \cite{localpartition}.

\vspace{-0.6em}

\mysubsection{From Non-Degrading FGPPs to {\sc $k'$-WEC}}\label{section:reduce}

\vspace{-0.3em}

We show below that any non-degrading max-FGPP can be reduced to the maximization version of {\sc $k'$-WEC}. Given a universe $U$, a family ${\cal S}$ of nonempty subsets of $U$, a function $w\!: {\cal S}\!\rightarrow\!\mathbb{R}$, and parameters $k'\!\in\!\mathbb{N}$ and $p'\!\in\!\mathbb{R}$,
 we seek a subfamily ${\cal S}'$ of disjoint sets from ${\cal S}$ satisfying
 $|\bigcup {\cal S}'|=k'$ whose value, given by $\sum_{S\in{\cal S}'}w(S)$, is at least $p'$.
Any non-degrading min-FGPP can be similarly reduced to the minimization version of {\sc $k'$-WEC}.

Let $\Pi$ be a max-FGPP satisfying $\frac{\alpha_1}{2}\!\geq\!\alpha_2$. Given an instance ${\cal I}\!=\!(G\!=\!(V,\!E),\!k,\!p)$ of $\Pi$, we define an instance $f({\cal I})\!=\!(U,\!{\cal S},\!w,\!k',\!p')$ of the maximization version of {\sc $k'$-WEC} as follows.

\vspace{-0.5em}
\begin{itemize}
\item $U\!=\!V$.
\item ${\cal S}\!=\!\bigcup_{i=1}^k{\cal S}_i$, where ${\cal S}_i$ contains the node-set of any connected subgraph of $G$ on exactly $i$ nodes.
\item $\forall S\!\in\!{\cal S}: w(S) = \mathrm{val}(S)$.
\item $k'\!=\!k$, and $p'\!=\!p$.
\end{itemize}
\vspace{-0.1em}

{\noindent We illustrate the reduction in Figure~\ref{fig:1} (see Appendix~\ref{app:fig1}). We first prove that our reduction is valid.}
\begin{lemma}\label{lemma:reducecor}
${\cal I}$ is a yes-instance iff $f({\cal I})$ is a yes-instance.
\end{lemma}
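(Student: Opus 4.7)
The plan is to prove the two directions of the biconditional by moving back and forth between a candidate node set $X\subseteq V$ of size $k$ and a subfamily $\mathcal{S}'\subseteq\mathcal{S}$ of disjoint sets whose union has size $k$, while tracking how $\mathrm{val}(X)$ compares to $\sum_{S\in\mathcal{S}'}w(S)$. The forward direction will be essentially bookkeeping, while the backward direction is where the non-degrading assumption $\alpha_1/2\geq \alpha_2$ must be invoked; this is where I expect the only real subtlety.

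For the forward direction, suppose $X\subseteq V$ with $|X|=k$ and $\mathrm{val}(X)\geq p$. I would let $C_1,\ldots,C_r$ be the connected components of the induced subgraph $G[X]$, and take $\mathcal{S}'=\{C_1,\ldots,C_r\}$. Each $C_i$ lies in some $\mathcal{S}_i$ (it is a connected subgraph of $G$ on at most $k$ nodes), the sets are disjoint, and $\bigcup_i C_i = X$, so $|\bigcup\mathcal{S}'|=k$. The key point is that, since distinct $C_i,C_j$ have no edges between them in $G$, every edge in $E(C_i,V\setminus C_i)$ is in fact an edge in $E(X,V\setminus X)$; together with $E(X) = \bigsqcup_i E(C_i)$ this gives $\sum_i \mathrm{val}(C_i) = \mathrm{val}(X) \geq p$. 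Thus $f(\mathcal{I})$ is a yes-instance, and notably the non-degrading assumption is not needed here.

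For the backward direction, suppose $\mathcal{S}'=\{S_1,\ldots,S_r\}$ is a family of disjoint sets in $\mathcal{S}$ with $|\bigcup\mathcal{S}'|=k$ and $\sum_j w(S_j)\geq p$, and set $X=\bigcup_j S_j$, so $|X|=k$. I would partition $E(X)$ into the edges inside some single $S_j$ and the ``inter'' edges $E_{\mathrm{int}}$ which go between two distinct $S_i,S_j$. Then each edge of $E(X,V\setminus X)$ contributes $1$ to exactly one term $|E(S_j,V\setminus S_j)|$, while each edge of $E_{\mathrm{int}}$ contributes $2$ to $\sum_j |E(S_j,V\setminus S_j)|$ and is lost from $\sum_j|E(S_j)|$ compared to $|E(X)|$. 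Substituting yields
\begin{equation*}
\sum_{j} \mathrm{val}(S_j) \;=\; \mathrm{val}(X) \;+\; (2\alpha_2-\alpha_1)\,|E_{\mathrm{int}}|.
\end{equation*}

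This identity is where the non-degrading hypothesis is essential: since $\alpha_1/2\geq\alpha_2$ we have $2\alpha_2-\alpha_1\leq 0$, so $\mathrm{val}(X)\geq \sum_j \mathrm{val}(S_j) = \sum_j w(S_j) \geq p$, proving that $\mathcal{I}$ is a yes-instance. The analogous statement for min-FGPPs would reverse the inequality in the same identity and thus require $2\alpha_2-\alpha_1\geq 0$, matching the min-version of non-degrading. The main obstacle is simply to carry out the edge bookkeeping for $E_{\mathrm{int}}$ cleanly so that the sign of $(2\alpha_2-\alpha_1)|E_{\mathrm{int}}|$ aligns with the max/min direction.
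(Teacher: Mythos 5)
Your proof is correct and follows essentially the same route as the paper: both directions decompose $X$ into the connected components of $G[X]$ (resp.\ assemble $X$ from the disjoint sets $S_j$), and the backward direction hinges on the observation that each inter-set edge contributes $\alpha_1$ to $\mathrm{val}(X)$ but $2\alpha_2$ to $\sum_j\mathrm{val}(S_j)$, which is favorable exactly when $\alpha_1/2\geq\alpha_2$. The only difference is presentational: you write this as a single closed-form identity $\sum_j\mathrm{val}(S_j)=\mathrm{val}(X)+(2\alpha_2-\alpha_1)|E_{\mathrm{int}}|$, whereas the paper obtains the same bound by peeling off one set at a time in a telescoping chain of inequalities.
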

\begin{proof}
First, assume there is a subset $X\!\subseteq\! V$ of size $k$ satisfying $\mathrm{val}(X)\!\geq\!p$. Let $G_1\!=\!(V_1,\!E_1),\ldots,G_t\!=\!(V_t,\!E_t)$, for some $1\!\leq\! t\!\leq\! k$, be the {\em maximal} connected components in the subgraph of $G$ induced by $X$. Then, for all $1\!\leq\! \ell\!\leq\! t$, $V_{\ell}\!\in\!{\cal S}$. Moreover, $\displaystyle{\sum_{\ell=1}^t|V_{\ell}|\!=\!|X|\!=\!k'}$, and $\displaystyle{\sum_{\ell=1}^tw(V_{\ell})\!=\!\mathrm{val}(X)\!\geq\!p'}$.

Now, assume there is a subfamily of disjoint sets $\{S_1,\ldots,S_t\}\!\subseteq\!{\cal S}$, for some $1\!\leq\! t\!\leq\! k$, such that $\displaystyle{\sum_{\ell=1}^t|S_{\ell}|\!=\!k'}$ and $\displaystyle{\sum_{\ell=1}^tw(S_{\ell})\!\geq\!p'}$. Thus, there are connected subgraphs $G_1\!=\!(V_1,\!E_1),\ldots,G_t\!=\!(V_t,\!E_t)$ of $G$, such that $V_{\ell}\!=\!S_{\ell}$, for all $1\!\leq\! \ell\!\leq\! t$.
 Let $X_{\ell}\!=\!\bigcup_{j=\ell}^tV_j$, for all $1\!\leq \ell\!\leq\! t$. Clearly, $|X_1|\!=\!k$. Since $\frac{\alpha_1}{2}\!\geq\!\alpha_2$, we get~that

\[\begin{array}{ll}

\mathrm{val}(X_1) &= \mathrm{val}(V_1) \!+\! \mathrm{val}(X_2) \!+\! \alpha_1|E(V_1,X_2)| \!-\! 2\alpha_2|E(V_1,X_2)|\\

&\geq \mathrm{val}(V_1) \!+\! \mathrm{val}(X_2)\\

&= \mathrm{val}(V_1) \!+\! \mathrm{val}(V_2) \!+\! \mathrm{val}(X_3) \!+\! \alpha_1|E(V_2,X_3)| \!-\! 2\alpha_2|E(V_2,X_3)|\\

&\geq \mathrm{val}(V_1) \!+\! \mathrm{val}(V_2) \!+\! \mathrm{val}(X_3)\\

& ...\\

&\geq \displaystyle{\sum_{\ell=1}^t\mathrm{val}(V_{\ell})}.

\end{array}\]
Thus, $\displaystyle{\mathrm{val}(X_1)\!\geq\!\sum_{\ell=1}^tw(V_{\ell})\!\geq\!p}$.\qed
\end{proof}
We now bound the number of connected subgraphs in $G$.
\begin{lemma}[\cite{enuconcom}]
There are at most $4^i(\Delta\!-\!1)^i|V|$ connected subgraphs of $G$ on at most $i$ nodes, which can be enumerated in time $O(4^i(\Delta\!-\!1)^i(|V|\!+\!|E|)|V|)$.
\end{lemma}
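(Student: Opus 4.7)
The plan is to prove the bound by encoding each connected subgraph via a canonical rooted spanning tree together with a Catalan-style word, and then turn the same encoding into the enumeration procedure. I would fix an arbitrary total order on $V$ and, for each connected subgraph $H$ on $j\le i$ nodes, designate its minimum vertex $v$ as the root; this guarantees that every subgraph is counted exactly once when we later sum over roots. Given the root $v$, I associate to $H$ its canonical DFS spanning tree (visiting neighbors in the fixed order), and then encode that tree by its DFS traversal as a Dyck word of length $2(j-1)$.

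The counting then splits into two independent factors. First, the number of Dyck words of length $2(j-1)$ is the Catalan number $C_{j-1}\le 4^{j-1}$. Second, for each fixed word, the embedding in $G$ is determined by choosing, at every ``descend'' step, which currently unused neighbor of the active vertex becomes the next child: the root contributes at most $\Delta$ such choices and every subsequent vertex at most $\Delta-1$ (excluding the edge back to its parent). Multiplying gives at most $4^{j-1}\Delta(\Delta-1)^{j-2}$ connected subgraphs of size exactly $j$ rooted at $v$; summing $j$ from $1$ to $i$, absorbing the geometric series, and summing over all $v\in V$ yields the claimed $O(4^i(\Delta-1)^i|V|)$ bound.

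For the enumeration algorithm, I would turn this encoding directly into a branching procedure: from each vertex $v$, perform a depth-first exploration that, at every step, tries to extend the current connected set by one new neighbor, cut off at depth $i$, and enforce the canonical ordering so that each subgraph is output exactly once. With appropriate incremental structures (a current frontier of candidate neighbors and a dynamic indicator of which vertices are already in the partial subgraph), each extension step and each cell of the canonicity test costs $O(|V|+|E|)$, matching the stated per-subgraph running time. The main obstacle I would have to handle carefully is the canonical form: I must ensure both that every connected subgraph has a unique such encoding and that canonicity is locally testable during the DFS, so that duplicates are pruned on the fly rather than by a global check. This is delicate but standard in the subgraph-enumeration literature, and once it is in place both the counting bound and the enumeration time follow immediately.
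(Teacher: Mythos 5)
Your argument is correct and is exactly the standard proof of this bound: inject each connected vertex set into a (root, Dyck word, per-step neighbor choice) triple, bound the Dyck words by the Catalan number $C_{j-1}\le 4^{j-1}$ and the neighbor choices by $\Delta(\Delta-1)^{j-2}$, and sum the geometric series (your constants do close, since $\Delta\le 2(\Delta-1)$ under the paper's standing assumption $\Delta\ge 2$). Note that the paper does not prove this lemma at all --- it imports it from the cited reference --- and that for the counting half you do not even need the canonical spanning tree (any encoding that decodes back to the subgraph gives the injection); canonicity is only needed to make the enumeration duplicate-free, and your per-node cost of $O(|V|+|E|)$ comfortably covers that bookkeeping.
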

Thus, we have the next result.
\begin{lemma}\label{lemma:reducetim}
The instance $f({\cal I})$ can be constructed in time $O(4^k(\Delta\!-\!1)^k(|V|\!+\!|E|)|V|)$. Moreover, for any $1\!\leq\!i\!\leq\!k$, $|{\cal S}_i|\!\leq\! 4^i(\Delta\!-\!1)^i|V|$.
\end{lemma}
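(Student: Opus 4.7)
The plan is to derive the lemma essentially as bookkeeping on top of the enumeration algorithm from the previous (cited) lemma. The size bound is immediate: since ${\cal S}_i$ consists precisely of the node-sets of connected subgraphs of $G$ on exactly $i$ nodes, its cardinality is at most the number of connected subgraphs of $G$ on \emph{at most} $i$ nodes, which the previous lemma caps at $4^i(\Delta-1)^i|V|$.

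For the construction, I would invoke the enumeration routine of the previous lemma with parameter $k$, obtaining in time $O(4^k(\Delta-1)^k(|V|+|E|)|V|)$ a list of all connected subgraphs of $G$ on at most $k$ nodes; grouping these by size and extracting the underlying node-sets yields ${\cal S}=\bigcup_{i=1}^k{\cal S}_i$ within the same asymptotic time. Setting $U=V$, $k'=k$, and $p'=p$ is trivial $O(|V|)$ work.

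The only remaining task is to compute the weight $w(S)=\mathrm{val}(S)=\alpha_1|E(S)|+\alpha_2|E(S,V\setminus S)|$ for each $S\in{\cal S}$. For a fixed $S$ with $|S|\leq k$, both $|E(S)|$ and $|E(S,V\setminus S)|$ can be obtained in $O(k\Delta)$ time by scanning the adjacency lists of the vertices of $S$ and classifying each incident edge as internal or crossing (duplicates can be removed by marking, e.g., using a bitmask of $S$). Summing over all $O(4^k(\Delta-1)^k|V|)$ members of ${\cal S}$ yields $O(k\Delta\cdot 4^k(\Delta-1)^k|V|)$, which, using $k\Delta\leq 2|E|\leq 2(|V|+|E|)$, is absorbed by the enumeration cost.

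The main (mild) obstacle is simply to verify that the weight computation does not inflate the time budget; once one observes that a $k$-vertex set is incident to at most $k\Delta$ edges, the per-set $O(k\Delta)$ cost fits inside the enumeration cost and the two bounds in the statement follow, with no further ideas required beyond careful accounting.
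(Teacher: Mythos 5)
Your proposal matches the paper's treatment: the paper derives this lemma directly from the cited enumeration result (at most $4^i(\Delta-1)^i|V|$ connected subgraphs on at most $i$ nodes, enumerable in time $O(4^i(\Delta-1)^i(|V|+|E|)|V|)$), leaving the weight computation as implicit bookkeeping, which you simply make explicit. One small slip: the inequality $k\Delta\leq 2|E|$ you invoke is false in general (e.g., in a star $K_{1,n-1}$ with $k=n$ one has $k\Delta=n(n-1)$ but $2|E|=2(n-1)$); however, the accounting still goes through because the cost of scanning the adjacency lists of a set $S$ is $\sum_{v\in S}\deg(v)\leq 2|E|=O(|V|+|E|)$ directly, so the per-set weight computation is absorbed by the enumeration cost exactly as you claim.
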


\vspace{-0.2em}

\mysubsection{Decreasing the Size of Inputs for {\sc $k'$-WEC}}\label{section:decrease}

In this section we develop a procedure, called \alg{Decrease}, which 
decreases the size of  an instance 
$\!(U,\!{\cal S},\!w,\!k',\!p')$ of $k'$-WEC. 
To this end, we find a subfamily $\widehat{\cal S}\!\subseteq\!{\cal S}$ that contains "enough" sets
 from ${\cal S}$, and thus enables to replace ${\cal S}$ by $\widehat{\cal S}$ without turning a yes-instance to a no-instance. The following definition captures such a subfamily $\widehat{\cal S}$.

\begin{definition}
Given a universe $E$, nonnegative integers $k$ and $p$, a family ${\cal S}$ of subsets of size $p$ of $E$, and a function $w\!:\!{\cal S}\!\rightarrow\!\mathbb{R}$, we say that a subfamily $\widehat{\cal S}\!\subseteq\!{\cal S}$ {\em max (min) represents} $\cal S$ if for any pair of sets $X\!\in\!{\cal S}$, and $Y\!\subseteq\!E\!\setminus\!X$ such that $|Y|\!\leq\!k\!-\!p$, there is a set $\widehat{X}\!\in\!\widehat{\cal S}$ disjoint from $Y$ such that $w(\!\widehat{X}\!)\!\geq\!w(\!X\!)$ ($w(\!\widehat{X}\!)\!\leq\!w(\!X\!)$).
\end{definition}
The following result states that small representative families can be computed
 efficiently.\footnote{This result builds on a powerful construction technique for representative families 
presented in \cite{representative}.}

\begin{theorem}[\cite{corrrepresentative}]\label{theorem:rep}
Given a constant $c\!\geq\!1$, a universe $E$, nonnegative integers $k$ and $p$, a family ${\cal S}$ of subsets of size $p$ of $E$, and a function $w\!:\!{\cal S}\!\rightarrow\!\mathbb{R}$, a subfamily $\widehat{\cal S}\!\subseteq\!{\cal S}$ of size at most $\displaystyle{\frac{(ck)^{k}}{p^p(ck\!-\!p)^{k\!-\!p}}2^{o(k)}\!\log\!|E|}$ that max (min) represents $\cal S$ can be computed in time $\displaystyle{O(\!|{\cal S}|(ck/(ck\!-\!p))^{k\!-\!p}2^{o(k)}\!\log\!|E|\! +\! |{\cal S}|\!\log\!|{\cal S}|)}$.
\end{theorem}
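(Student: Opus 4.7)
The plan is to realise $\widehat{\cal S}$ through a derandomised color-coding construction. I would begin by computing a family $\mathcal{F}$ of functions $f : E \to \{1, \ldots, ck\}$ that is \emph{$(|E|,k,p)$-separating}: for every $X \in {\cal S}$ and every $Y \subseteq E \setminus X$ with $|Y| \leq k - p$, there exists $f \in \mathcal{F}$ that is injective on $X$ and satisfies $f(X) \cap f(Y) = \emptyset$. Such an $\mathcal{F}$ is obtained by composing a Naor--Schulman--Srinivasan universal set with a $ck$-colouring perfect-hash family, and can be built in time nearly linear in $|\mathcal{F}| \cdot |E|$.

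Given $\mathcal{F}$, I would sort ${\cal S}$ in decreasing order of $w$ (ascending for the min-variant) and scan its sets once. For each $S$ and each $f \in \mathcal{F}$ injective on $S$, I would consult a hash table keyed by the pair $(f, f(S))$, where $f(S) \subseteq \{1,\ldots,ck\}$ has size $p$: if the slot is empty I would store $S$ there and insert $S$ into $\widehat{\cal S}$. To verify max-representation, fix $X \in {\cal S}$ and $Y \subseteq E \setminus X$ with $|Y| \leq k - p$, take an $f$ witnessing the separating property, and observe that by the time $X$ is processed the slot $(f, f(X))$ already contains some $\widehat X$ with $f(\widehat X) = f(X)$ and $w(\widehat X) \geq w(X)$; since $f(\widehat X) \cap f(Y) = \emptyset$ forces $\widehat X \cap Y = \emptyset$, $\widehat X$ is the desired representative. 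The argument for the min-version is symmetric.

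For the size and time bounds, I would charge each member of $\widehat{\cal S}$ to a distinct pair $(f, C)$ with $|C| = p$, which yields $|\widehat{\cal S}| \leq |\mathcal{F}| \binom{ck}{p}$. Taking $|\mathcal{F}|$ of order $\binom{ck}{k}/\binom{ck}{p} \cdot 2^{o(k)} \log |E|$ -- which suffices for the separating property under the NSS-based construction -- the product telescopes to $\binom{ck}{k} \cdot 2^{o(k)} \log |E|$, and Stirling's approximation converts $\binom{ck}{k}$ into $(ck)^k/(p^p(ck-p)^{k-p}) \cdot 2^{o(k)}$, giving the stated size. The runtime is dominated by the $O(|{\cal S}|\log|{\cal S}|)$ sort plus $O(|\mathcal{F}|)$ hash lookups per set; applying Stirling once more to $|\mathcal{F}|$ yields the factor $(ck/(ck-p))^{k-p} 2^{o(k)} \log |E|$ claimed in the theorem.

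The main obstacle is extracting the precise ratio $(ck)^k/(p^p(ck-p)^{k-p})$: a naive approach based on an $(|E|,k)$-universal set produces $2^{k+o(k)} \log |E|$ colourings, which when multiplied by $\binom{ck}{p}$ overshoots the target by a $2^{\Omega(k)}$ factor. The tight bound demands the $ck$-ary separating construction of \cite{representative,corrrepresentative}, calibrated so that $|\mathcal{F}|$ and $\binom{ck}{p}$ balance and their product recovers $\binom{ck}{k}$ up to $2^{o(k)}$. This balancing, together with careful Stirling bookkeeping to translate binomial coefficients into the elementary exponential form appearing in the statement, is where the work truly concentrates.
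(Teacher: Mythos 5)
The paper itself offers no proof of Theorem~\ref{theorem:rep} --- it is imported verbatim from \cite{corrrepresentative} --- so the comparison is with that source. Your high-level paradigm (build a separating structure, bucket the sets of ${\cal S}$, keep one extremal-weight set per bucket, and argue that the kept set both dominates in weight and is forced to be disjoint from $Y$) is exactly the paradigm of \cite{representative,corrrepresentative}, and your representation argument is sound as far as it goes. The quantitative core, however, fails. First, the Stirling identification $\binom{ck}{k}=\frac{(ck)^k}{p^p(ck-p)^{k-p}}2^{o(k)}$ is false: for $c=1$ the left side is $1$ while the right side is $\binom{k}{p}2^{o(k)}$, and for $c=2$, $p=k$ the two sides are $4^{k+o(k)}$ and $2^{k}$. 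Second, no family of colourings $f:E\to\{1,\dots,ck\}$ with your separating property can have size $\binom{ck}{k}/\binom{ck}{p}\cdot2^{o(k)}\log|E|$ (for $c=1$ that quantity is below $1$): a uniformly random $f$ works for a fixed pair $(X,Y)$ with probability $\frac{ck(ck-1)\cdots(ck-p+1)}{(ck)^p}\bigl(\frac{ck-p}{ck}\bigr)^{k-p}$, so $|{\cal F}|$ must be at least the inverse of this, and then your charging gives $|\widehat{\cal S}|\le|{\cal F}|\binom{ck}{p}=\frac{(ck)^k}{p!\,(ck-p)^{k-p}}2^{o(k)}\log|E|=\frac{(ck)^k e^{p}}{p^p(ck-p)^{k-p}}2^{o(k)}\log|E|$. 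The $e^{p}=2^{\Theta(p)}$ overshoot --- the usual loss of colour coding --- is intrinsic to bucketing by pairs $(f,f(S))$ and cannot be calibrated away by adjusting the number of colours; the same factor contaminates your running-time bound.

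What \cite{representative,corrrepresentative} actually do is replace colourings by an $(|E|,p,k-p)$-separating collection of \emph{subsets}: a family ${\cal C}\subseteq 2^{E}$ such that every $X$ of size $p$ and every disjoint $Y$ of size at most $k-p$ admit some $C\in{\cal C}$ with $X\subseteq C$ and $Y\cap C=\emptyset$. Keeping, for each $C$, one maximum-weight (resp.\ minimum-weight) $S\in{\cal S}$ with $S\subseteq C$ yields $|\widehat{\cal S}|\le|{\cal C}|$ with no $\binom{ck}{p}$ multiplier, and a random $C$ containing each element independently with probability $x=p/(ck)$ separates a fixed pair with probability exactly $x^{p}(1-x)^{k-p}=p^p(ck-p)^{k-p}/(ck)^k$, whose inverse is the stated size bound with no Stirling step at all. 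The genuinely hard part --- which your sketch delegates to a Naor--Schulman--Srinivasan-type construction, but which splitters \cite{splitter} alone do not deliver --- is derandomizing ${\cal C}$ with only $2^{o(k)}$ overhead while supporting, for each $S$, fast enumeration of the sets $C\supseteq S$; this is done in \cite{representative} by a multi-level divide-and-conquer and is precisely where the $(ck/(ck-p))^{k-p}2^{o(k)}\log|E|$ per-set cost in the running time comes from. As written, your construction produces a correct representative family, but one larger, and slower to compute, than claimed by a factor of $2^{\Theta(p)}$.
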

We next consider the maximization version of $k'$-WEC and max representative families. The minimization version of $k'$-WEC can be similarly handled by using min representative families. Let \alg{RepAlg}$(E,\!k,\!p,\!{\cal S},\!w)$ denote the algorithm in Theorem \ref{theorem:rep} where $c\!=\!2$, and let ${\cal S}_i\!=\!\{S\!\in\!{\cal S}\!: |S|\!=\!i\}$, for all $1\!\leq\!i\!\leq\!k'$.

We now present procedure \alg{Decrease} (see the pseudocode below), which  replaces each family ${\cal S}_i$ by a family $\widehat{\cal S}_i\!\subseteq\!{\cal S}_i$ that represents ${\cal S}_i$. First, we state that procedure \alg{Decrease} is correct (the proof is given in Appendix~\ref{app:someproofs}).

\renewcommand{\thealgorithm}{}
\floatname{algorithm}{Procedure}
\begin{algorithm}[!ht]
\caption{\alg{Decrease}($U,\!{\cal S},\!w,\!k',\!p'$)}
\begin{algorithmic}[1]
\STATE {\bf for} $i=1,2,\ldots,k'$ {\bf do} $\widehat{\cal S}_i\Leftarrow$ \alg{RepAlg}($U,\!k',\!i,\!{\cal S}_i,\!w$). {\bf end for}
\STATE $\widehat{\cal S}\Leftarrow \bigcup_{i=1}^k\widehat{\cal S}_i$.
\STATE {\bf return} ($U,\!\widehat{\cal S},\!w,\!k',\!p'$).
\end{algorithmic}
\end{algorithm}

\begin{lemma}\label{lemma:decreasecor}
$(U,\!{\cal S},\!w,\!k',\!p')$ is a yes-instance iff $(U,\!\widehat{\cal S},\!w,\!k',\!p')$ is a yes-instance.
\end{lemma}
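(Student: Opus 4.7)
The backward direction is immediate, since $\widehat{\mathcal S}\subseteq{\mathcal S}$ implies that any feasible solution for $(U,\widehat{\mathcal S},w,k',p')$ is also feasible for $(U,{\mathcal S},w,k',p')$ with the same value. So the entire proof reduces to the forward direction.

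For the forward direction, the plan is to start with an arbitrary solution $\{S_1,\ldots,S_t\}\subseteq{\mathcal S}$ (pairwise disjoint, $\sum_\ell |S_\ell|=k'$, $\sum_\ell w(S_\ell)\geq p'$), and then swap the $S_\ell$'s one at a time for sets drawn from $\widehat{\mathcal S}$, preserving disjointness, total cardinality, and non-decreasing total weight. Concretely, let $i_\ell=|S_\ell|$, so $S_\ell\in{\mathcal S}_{i_\ell}$ and $\widehat{\mathcal S}_{i_\ell}$ max-represents ${\mathcal S}_{i_\ell}$ in the sense of the definition with $E=U$, $k=k'$, $p=i_\ell$. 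To replace $S_1$, set $Y=S_2\cup\cdots\cup S_t$; since the $S_\ell$ are disjoint, $Y\subseteq U\setminus S_1$ and $|Y|=k'-i_1=k-p$, so by max-representation there is $\widehat S_1\in\widehat{\mathcal S}_{i_1}$ disjoint from $Y$ with $w(\widehat S_1)\geq w(S_1)$. Now iterate: at step $\ell$, we already have $\widehat S_1,\ldots,\widehat S_{\ell-1}\in\widehat{\mathcal S}$ and the originals $S_{\ell+1},\ldots,S_t$, all pairwise disjoint, with $|\widehat S_j|=i_j$ for $j<\ell$; applying max-representation to ${\mathcal S}_{i_\ell}$ against $Y=\widehat S_1\cup\cdots\cup\widehat S_{\ell-1}\cup S_{\ell+1}\cup\cdots\cup S_t$ (of size exactly $k'-i_\ell$) yields $\widehat S_\ell\in\widehat{\mathcal S}_{i_\ell}$ disjoint from $Y$ with $w(\widehat S_\ell)\geq w(S_\ell)$.

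After $t$ such swaps, $\{\widehat S_1,\ldots,\widehat S_t\}\subseteq\widehat{\mathcal S}$ is a pairwise disjoint subfamily with $\sum_\ell |\widehat S_\ell|=\sum_\ell i_\ell=k'$ and $\sum_\ell w(\widehat S_\ell)\geq\sum_\ell w(S_\ell)\geq p'$, so it witnesses $(U,\widehat{\mathcal S},w,k',p')$ as a yes-instance.

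I do not foresee a real obstacle here; the only point requiring care is verifying that, at every step $\ell$ of the swapping process, the auxiliary set $Y$ indeed satisfies the two hypotheses of max-representation, namely $Y\subseteq U\setminus S_\ell$ (which follows from the running invariant that all currently chosen sets are pairwise disjoint from $S_\ell$) and $|Y|\leq k'-i_\ell$ (which holds with equality, since $|Y|=k'-|S_\ell|$ at every step). Once this is checked, the conclusion follows by a trivial induction on $\ell$.
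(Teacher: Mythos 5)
Your proof is correct and uses essentially the same exchange argument as the paper: both rely on the representative-family property of each $\widehat{\cal S}_i$ to swap a solution set $S_\ell$ for some $\widehat S_\ell\in\widehat{\cal S}_{i_\ell}$ of equal size, disjoint from the rest, with no smaller weight. The only cosmetic difference is that the paper phrases the replacement as an extremal argument (take a solution maximizing $|{\cal S}'\cap\widehat{\cal S}|$ and derive a contradiction), whereas you carry out the swaps explicitly by induction on $\ell$; the key step and the verification that $|Y|=k'-i_\ell$ and $Y\subseteq U\setminus S_\ell$ are identical.
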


\comment{
\begin{proof}
First, assume that $(U,\!{\cal S},\!w,\!k',\!p')$ is a yes-instance. Let ${\cal S}'$ be 
a subfamily of disjoint sets from ${\cal S}$, such that $|\bigcup{\cal S}'|\!=\!k'$, $\sum_{S\in{\cal S}'}w(S)\!\geq\!p'$, 
and there is no subfamily ${\cal S}''$ satisfying these conditions, and
 $|{\cal S}'\!\cap\!\widehat{\cal S}|\!<\!|{\cal S}''\!\cap\!\widehat{\cal S}|$. Suppose, by way of
 contradiction, that there is a set $S\!\in\!({\cal S}_i\cap{\cal S}')\!\setminus\!\widehat{\cal S}$,
 for some $1\!\leq\!i\!\leq\!k'$. By Theorem \ref{theorem:rep}, there is a set $\widehat{S}\!\in\!\widehat{\cal S}_i$
 such that $w(\widehat{S})\!\geq\!w(S)$, and $\widehat{S}\!\cap\!S'\!=\!\emptyset$,
 for all $S'\!\in\!{\cal S}'\!\setminus\!\{S\}$. Thus, ${\cal S}''\!=\!({\cal S}'\!\setminus\!\{S\})\!\cup\!\{\widehat{S}\}$
 is a solution to $(U,\!{\cal S},\!w,\!k',\!p')$. Since $|{\cal S}'\!\cap\!\widehat{\cal S}|\!<\!|{\cal S}''\!\cap\!\widehat{\cal S}|$,
 this is a contradiction.

Now, assume that $(U,\!\widehat{\cal S},\!w,\!k',\!p')$ is a yes-instance. Since $\widehat{\cal S}\!\subseteq\!{\cal S}$, 
we immediately get that $(U,\!{\cal S},\!w,\!k',\!p')$ is also a yes-instance.\qed
\end{proof}
}
{\noindent Theorem \ref{theorem:rep} immediately implies the following result.}

\begin{lemma}\label{lemma:decreasetim}
Procedure \alg{Decrease} runs in time $\displaystyle{O(\sum_{i=1}^{k'}(|{\cal S}_i|(\frac{2k'}{2k'\!-\!i})^{k'-i}2^{o(k')}\!\log\!|U|}$\\$+ |{\cal S}_i|\!\log\!|{\cal S}_i|))$. Moreover, $\displaystyle{|\widehat{\cal S}| \leq \sum_{i=1}^{k'}\frac{(2k')^{k'}}{i^i(2k'\!-\!i)^{k'-i}}2^{o(k')}\!\log\!|U| \leq  2.5^{k'+o(k')}\!\log\!|U|}$.
\end{lemma}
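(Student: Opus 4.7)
The plan is to apply Theorem \ref{theorem:rep} to each of the $k'$ invocations of \alg{RepAlg} inside procedure \alg{Decrease} and sum the resulting bounds. The $i$-th call instantiates Theorem \ref{theorem:rep} with universe $U$, parameters $k\leftarrow k'$ and $p\leftarrow i$, family ${\cal S}_i$, weight $w$, and constant $c=2$; this yields time $O\bigl(|{\cal S}_i|(2k'/(2k'-i))^{k'-i}\,2^{o(k')}\log|U| + |{\cal S}_i|\log|{\cal S}_i|\bigr)$ and size $|\widehat{\cal S}_i|\leq \frac{(2k')^{k'}}{i^i(2k'-i)^{k'-i}}\,2^{o(k')}\log|U|$. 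Summing the time bounds over $i=1,\ldots,k'$ reproduces the lemma's time claim verbatim (the final union $\widehat{\cal S}=\bigcup_i\widehat{\cal S}_i$ costs only polynomially more), and summing the size bounds produces the first stated estimate on $|\widehat{\cal S}|$.

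To derive the cleaner $2.5^{k'+o(k')}\log|U|$ bound, I would substitute $i=xk'$ and verify
\[
\frac{(2k')^{k'}}{i^i(2k'-i)^{k'-i}} = \left(\frac{2}{x^x(2-x)^{1-x}}\right)^{k'} = e^{k'\,g(x)},\quad g(x)=\ln 2 - x\ln x - (1-x)\ln(2-x).
\]
A short computation gives $g''(x)=-\frac{1}{x}-\frac{1}{2-x}-\frac{1}{(2-x)^2}<0$ on $(0,1]$, so $g$ is strictly concave and attains its maximum at the unique critical point $x^*$ solving $\ln((2-x)/x)=1/(2-x)$. I would then show $g(x^*)<\ln 2.5$ by combining the algebraic identity $g(2/3)=\ln 2+\frac{1}{3}\ln(27/16)<\ln 2.5$ (equivalent to $27/16<(5/4)^3=125/64$) with the concavity-based bound $g(x^*)-g(2/3)\leq |g'(2/3)|\cdot(2/3-x^*)$, where $|g'(2/3)|=3/4-\ln 2$ is small and a lower bound $|g''|\geq 3/2$ on $[0,2/3]$ controls $2/3-x^*$ via the mean value theorem. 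Hence every summand is at most $2.5^{k'}$, and the $k'$ summands contribute a polynomial factor absorbed by $2^{o(k')}$.

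The main obstacle is this last step: the stationarity equation is transcendental, so the inequality $g(x^*)<\ln 2.5$ cannot be certified by a clean closed-form identity, and one must pair an algebraic evaluation at a convenient point with a concavity correction. The deliberate slack between the true maximum of the base ($\approx 2.382$) and the stated $2.5$ is precisely what lets this short argument succeed without delicate estimates.
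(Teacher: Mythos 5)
Your proposal is correct and follows the same route as the paper, which simply declares the lemma an immediate consequence of Theorem \ref{theorem:rep}: you instantiate the theorem with $c=2$, $k\leftarrow k'$, $p\leftarrow i$ for each call and sum. The only part the paper leaves entirely implicit is the bound $\max_{0<x\leq 1} 2/(x^x(2-x)^{1-x})<2.5$, and your concavity argument for it checks out ($g'(x)=\ln\frac{2-x}{x}-\frac{1}{2-x}$, $|g''(x)|\geq 1/x\geq 3/2$ on $(0,2/3]$, so $2/3-x^*\leq\frac{2}{3}(3/4-\ln 2)$ and the correction $g(x^*)-g(2/3)\leq(3/4-\ln 2)^2\cdot\frac{2}{3}\approx 0.002$ is far below the slack $\ln 2.5-g(2/3)\approx 0.049$).
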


\vspace{-0.2em}

\mysubsection{An Algorithm for Any FGPP}\label{section:algorithm}

We now present \alg{FGPPAlg}, which solves any FGPP in time $O^*(4^{k+o(k)}\Delta^k)$. Assume w.l.o.g that $\Delta\!\geq\!2$, and let \alg{DegAlg}($G,\!k,\!p$) denote the algorithm solving any degrading FGPP in time $O((\Delta\!+\!1)^{k+1}|V|)$, given in~\cite{localpartition}.

The algorithm given in Section 5 of \cite{corrmatchpack} solves a problem closely related to $k'$-WEC, and can be 
easily modified to solve $k'$-WEC in time $O(2.851^{k'}|{\cal S}||U|\cdot$\\$\log^2|U|)$.
 We call this algorithm \alg{WECAlg}($U,\!{\cal S},\!w,\!k',\!p'$). 

Let $\Pi$ be an FGPP having parameters $\alpha_1$ and $\alpha_2$. We now describe algorithm \alg{FGPPAlg} (see the pseudocode below). 
First, if $\Pi$ is a degrading FGPP,  then \alg{FGPPAlg} solves $\Pi$ by calling \alg{DegAlg}.
Otherwise, by using the reduction $f$, \alg{FGPPAlg} transforms the input into an instance of {\sc $k'$-WEC}. Then, \alg{FGPPAlg} decreases the size of the resulting
 instance by calling the procedure \alg{Decrease}. Finally, \alg{FGPPAlg} solves $\Pi$ by calling \alg{WECAlg}.

\renewcommand{\thealgorithm}{1}
\floatname{algorithm}{Algorithm}
\begin{algorithm}[!ht]
\caption{\alg{FGPPAlg}($G\!=\!(V,\!E),\!k,\!p$)}
\begin{algorithmic}[1]
\IF{($\Pi$ is a max-FGPP and $\frac{\alpha_1}{2}\!\leq\! \alpha_2$) or ($\Pi$ is a min-FGPP and $\frac{\alpha_1}{2}\!\geq\! \alpha_2$)}
	\STATE {\bf accept} iff \alg{DegAlg}$(G,\!k,\!p)$ accepts.
\ENDIF
\STATE $(U,\!{\cal S},\!w,\!k',\!p')\Leftarrow f(G,\!k,\!p)$.
\STATE $(U,\!\widehat{\cal S},\!w,\!k',\!p')\Leftarrow$ \alg{Decrease}$(U,\!{\cal S},\!w,\!k',\!p')$.
\STATE {\bf accept} iff \alg{WECAlg}($U,\!\widehat{\cal S},\!w,\!k',\!p'$) accepts.
\end{algorithmic}
\end{algorithm}

\begin{theorem}
Algorithm \alg{FGPPAlg} solves $\Pi$ in time $O(4^{k+o(k)}\Delta^k(|V|\!+\!|E|)|V|)$.
\end{theorem}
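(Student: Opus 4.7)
The plan is to establish correctness and running time separately, with a natural case split on whether $\Pi$ is degrading.

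For correctness, if $\Pi$ is a degrading FGPP, the algorithm returns the answer of \alg{DegAlg}$(G,k,p)$, whose correctness is from \cite{localpartition}. Otherwise, $\Pi$ is a non-degrading max- or min-FGPP, and correctness follows by chaining three equivalences: Lemma \ref{lemma:reducecor} tells us that $(G,k,p)$ is a yes-instance of $\Pi$ iff $f(G,k,p)$ is a yes-instance of $k'$-WEC; Lemma \ref{lemma:decreasecor} tells us that replacing ${\cal S}$ by $\widehat{\cal S}$ preserves the yes/no status; and \alg{WECAlg} correctly decides the resulting $k'$-WEC instance.

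For the running time, the degrading branch runs in $O((\Delta+1)^{k+1}|V|)$ time, which is clearly within the claimed bound since $\Delta \geq 2$. In the non-degrading branch, I would sum the contributions of Steps 4--6. Step 4 costs $O(4^k(\Delta-1)^k(|V|+|E|)|V|)$ by Lemma \ref{lemma:reducetim}, and produces families with $|{\cal S}_i|\leq 4^i(\Delta-1)^i|V|$. For Step 5, I plug these bounds into the expression of Lemma \ref{lemma:decreasetim}, reducing the cost (up to polynomial factors) to bounding
\[
\sum_{i=1}^{k} 4^i \Delta^i \left(\frac{2k}{2k-i}\right)^{k-i} 2^{o(k)}.
\]
For Step 6, \alg{WECAlg} costs $O(2.851^k|\widehat{\cal S}|\cdot\mathrm{poly}(|V|))$, and by Lemma \ref{lemma:decreasetim} we have $|\widehat{\cal S}|\leq 2.5^{k+o(k)}\log|V|$, yielding an overall bound of $O^*((2.851\cdot 2.5)^{k+o(k)}) = O^*(7.1275^{k+o(k)})$; since $\Delta\geq 2$, this is subsumed by $O^*(4^k\Delta^k) \geq O^*(8^k)$.

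The main obstacle is verifying that the sum in Step 5 is dominated by its $i=k$ term. Writing $i=k(1-y)$ and taking logarithms, one gets that $\log\!\bigl(4^i(2k/(2k-i))^{k-i}\bigr)/k = 2\log 2 - y\log 2 - y\log(1+y)$, whose derivative in $y\in[0,1]$ is $-\log 2 - \log(1+y) - y/(1+y) < 0$. Hence the expression is monotonically decreasing in $y$, so is maximized at $y=0$ (i.e.\ $i=k$), giving value $4^k$. Multiplying by $\Delta^k$ from the $\Delta^i$ factor (again maximized at $i=k$) and by a factor $k$ from summing $k$ terms, the bound on Step 5 becomes $O(4^{k+o(k)}\Delta^k\cdot|V|\log|V|)$. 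Adding the costs of all three steps, and observing that Step 4's contribution dominates the polynomial factors, yields the claimed overall bound $O(4^{k+o(k)}\Delta^k(|V|+|E|)|V|)$.
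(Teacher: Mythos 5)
Your proposal is correct and follows essentially the same route as the paper: correctness via Lemmas \ref{lemma:reducecor} and \ref{lemma:decreasecor} plus the correctness of \alg{DegAlg} and \alg{WECAlg}, and a running-time bound obtained by summing the costs of the reduction, \alg{Decrease}, and \alg{WECAlg}, with the sum over $i$ bounded by maximizing $4^{\alpha}\Delta^{\alpha}(2/(2-\alpha))^{1-\alpha}$ at $\alpha=1$ and the $2.851^k\cdot 2.5^{k+o(k)}$ term absorbed since $\Delta\geq 2$. The only difference is that you explicitly verify the monotonicity of the maximized expression via its derivative, a detail the paper merely asserts.
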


\begin{proof}
The correctness of the algorithm follows immediately from Lemmas \ref{lemma:reducecor} and \ref{lemma:decreasecor}, and the correctness of \alg{DegAlg} and \alg{WECAlg}.

By Lemmas \ref{lemma:reducetim} and \ref{lemma:decreasetim}, and the running times of \alg{DegAlg} and \alg{WECAlg}, algorithm \alg{FGPPAlg} runs in time

\vspace{-0.5em}

\[\begin{array}{lll}

  &  &\displaystyle{O(4^k(\Delta\!-\!1)^k(|V|\!+\!|E|)|V| + \sum_{i=1}^k(4^i(\Delta\!-\!1)^i|V|(\frac{2k}{2k-i})^{k-i}2^{o(k)}\!\log\!|V|)}\\
  &  & \displaystyle{+\ 2.851^k2.5^{k+o(k)}|V|\log^3|V|)}\\

  &= &\displaystyle{O(4^k\Delta^k(|V|\!+\!|E|)|V| + 2^{o(k)}|V|\!\log\!|V|[\max_{0\leq\alpha\leq1}\{4^{\alpha}\Delta^{\alpha}(\frac{2}{2-\alpha})^{1-\alpha}\}]^k)}\\

   &= &\displaystyle{O(4^k\Delta^k(|V|\!+\!|E|)|V| + 4^{k+o(k)}\Delta^k|V|\!\log\!|V|)}\\

  &= &\displaystyle{O(4^{k+o(k)}\Delta^k(|V|\!+\!|E|)|V|)}.
\end{array}\]
\vspace{-0.5em}\qed
\end{proof}

\mysection{Solving {\sc Max $(k,n\!-\!k)$ Cut} in Time $O^*\!(4^{p+o(p)})$}\label{section:res2a}

We give below an $O^*\!(4^{p+o(p)})$ time algorithm for {\sc Max $(k,n\!-\!k)$ Cut}.
In Section \ref{section:mcsimple} we show that it suffices to consider an easier variant of
{\sc Max $(k,n\!-\!k)$ Cut}, that we call {\sc NC-Max $(k,\!n\!-\!k)$-Cut}.
We solve this variant in Section \ref{section:ncmcut}. Finally, our algorithm for
{\sc Max $(k,n\!-\!k)$ Cut} is given in Section \ref{section:cutalg}.

\vspace{-0.2em}

\mysubsection{Simplifying {\sc Max $(k,n\!-\!k)$ Cut}}\label{section:mcsimple}

We first define an easier variant of {\sc Max $(k,\!n\!-\!k)$ Cut}. Given a graph $G\!=\!(V,\!E)$ in which each node is
 either red or blue, and positive integers $k$ and $p$,
 {\sc NC-Max $(k,\!n\!-\!k)$-Cut} asks if there is a subset $X\!\subseteq\! V$ of exactly $k$ red nodes and no blue nodes,
 such that at least $p$ edges in $E(X,\!V\!\setminus\! X)$ have a blue endpoint.

Given an instance $(G,\!k,\!p)$ of {\sc Max $(k,\!n\!-\!k)$ Cut}, we perform several iterations of
 coloring the nodes in $G$; thus, if $(G,\!k,\!p)$ is a yes-instance, we generate at least one yes-instance of
 {\sc NC-Max $(k,\!n\!-\!k)$-Cut}. To determine how to color the nodes in $G$, we need the following definition of universal sets.

\vspace{-0.2em}

\begin{definition}
Let ${\cal F}$ be a set of functions $f\!:\! \{1,\!2,\!\ldots,\!n\}\rightarrow \{0,\!1\}$. We say that ${\cal F}$ is an $(n,t)$-universal set if for every subset $I\!\subseteq\!\{1,\!2,\!\ldots,\!n\}$ of size $t$ and a function $f'\!:\!I\!\rightarrow\!\{0,\!1\}$, there is a function $f\!\in\!{\cal F}$ such that for all $i\!\in\!I$, $f(i)\!=\!f'(i)$.
\end{definition}

\vspace{-0.2em}

{\noindent The following result asserts that small universal sets can be computed efficiently.}

\vspace{-0.2em}

\begin{lemma}[\cite{splitter}]\label{lemma:splitter}
There is an algorithm, \alg{UniSetAlg}, that given a pair of integers $(n,\!t)$, computes an $(n,\!t)$-universal set ${\cal F}$ of size $2^{t\!+\!o(t)}\!\log\! n$ in time $O\!(2^{t\!+\!o(t)}n\!\log\!n\!)$.
\end{lemma}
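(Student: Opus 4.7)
The plan is to build the universal set by composition: first shrink the ground set from $[n]$ to a much smaller set $[m]$ via a splitter-type family of functions, and then apply a small universal set on $[m]$. Concretely, I would construct a family ${\cal H}$ of functions $h : [n] \to [m]$ with the property that for every $t$-subset $I \subseteq [n]$, some $h \in {\cal H}$ is injective on $I$, together with an $(m,t)$-universal set ${\cal G}$ on the small ground set, and set ${\cal F} = \{ g \circ h : h \in {\cal H},\ g \in {\cal G} \}$. Correctness follows immediately: given a $t$-subset $I$ and a target $f' : I \to \{0,1\}$, pick $h \in {\cal H}$ injective on $I$, extend $f' \circ h^{-1}$ (which is defined on $h(I) \subseteq [m]$) arbitrarily to all of $[m]$, and pick $g \in {\cal G}$ realizing this extension on the $t$-subset $h(I)$; then $g \circ h$ agrees with $f'$ on $I$.

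For the small-domain universal set ${\cal G}$, I would arrange that $m = t(1+o(1))$, so that merely enumerating all $2^m = 2^{t+o(t)}$ functions on $[m]$ yields an $(m,t)$-universal set of the desired size in time $O(2^{t+o(t)} m)$. The real work is thus in building the splitter family ${\cal H}$ of size $2^{o(t)} \log n$. Here I would use polynomial-based hashing over finite fields together with a multi-level splitter trick: partition $[n]$ into roughly balanced blocks in such a way that every $t$-subset is split evenly across the blocks, recurse within each block to obtain injectivity on subsets of the smaller sub-parameter, and concatenate block-level outputs. Carrying out this recursion and bounding the size by $2^{o(t)} \log n$ is the heart of \cite{splitter}.

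The main obstacle will be keeping ${\cal H}$ of size $2^{o(t)} \log n$ while forcing injectivity on every $t$-subset after reaching a range of size only $m = t(1+o(1))$. A naive classical perfect hash family $[n] \to [t]$ already has size $\Omega(e^t \log n) = 2^{(\log_2 e)\, t + o(t)} \log n$, which when composed with the brute-force ${\cal G}$ would give a final size of $2^{(1+\log_2 e)\, t + o(t)} \log n$ --- far too large. Shaving the exponent from $1 + \log_2 e \approx 2.44$ down to $1 + o(1)$ is the key point, and it is exactly what the splitter recursion buys: by breaking the construction into many levels, the splitter overhead is amortized to a sub-exponential $2^{o(t)}$ factor, so that the leading $2^{t+o(t)}$ term comes entirely from the brute-force universal set on the final small block. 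Once ${\cal H}$ is built, the composition step is routine and yields both the size bound and the running time $O(2^{t+o(t)} n \log n)$ claimed in the lemma.
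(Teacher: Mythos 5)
The paper offers no proof of this lemma at all---it is imported as a black box from \cite{splitter}---so your proposal has to be judged against the construction of Naor, Schulman and Srinivasan. Measured that way, there is a genuine gap, and it sits exactly at the step you yourself call the heart of the argument. You require a family ${\cal H}$ of $2^{o(t)}\log n$ functions $h\colon[n]\to[m]$ with $m=t(1+o(1))$ such that every $t$-subset of $[n]$ is mapped injectively by some $h\in{\cal H}$. No such family exists. Fix any $h$ and let $n_1,\dots,n_m$ be its fiber sizes; the number of $t$-subsets on which $h$ is injective is the elementary symmetric polynomial $e_t(n_1,\dots,n_m)\le{m\choose t}(n/m)^t$ by Maclaurin's inequality, whence
\[
|{\cal H}|\;\ge\;\frac{{n\choose t}}{{m\choose t}(n/m)^t}\;=\;\bigl(1-o(1)\bigr)\prod_{i=0}^{t-1}\Bigl(1-\frac{i}{m}\Bigr)^{-1}\;=\;e^{(1-o(1))t}\;\approx\;2^{1.44t}
\]
for $m=(1+o(1))t$ and $n\gg t$. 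This Fredman--Koml\'os-type bound is information-theoretic: it applies to the final composed family no matter how many levels of splitter recursion produced it, so the recursion cannot ``amortize the overhead to $2^{o(t)}$'' as you claim. Worse, the entire paradigm of compressing to a small range and then enumerating all $2^m$ patterns is unsalvageable for \emph{any} choice of $m$: the product of $2^m$ with the injectivity lower bound $\prod_{i<t}(1-i/m)^{-1}$ is, over all $m\ge t$, never below roughly $2^{2.1t}$, so you can never escape the $\approx 2^{(1+\log_2 e)t}$ regime this way, let alone reach $2^{t+o(t)}$.

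The construction in \cite{splitter} avoids this by never asking for injectivity into a range of linear size. Injectivity is enforced only into $[t^2]$, via an $(n,t,t^2)$-splitter (a perfect hash family) of size $\mathrm{poly}(t)\log n$---cheap, because a random map into a quadratic range is injective on a fixed $t$-set with constant probability. The $2^{t+o(t)}$ factor then arises from a different mechanism over the domain $[t^2]$: splitters partition the unknown $t$-set into small blocks with sub-parameters $t_1,\dots,t_l$ summing to $t$, a universal set is built for each block, and the final family is the product over blocks, so the exponential contributions \emph{add} in the exponent ($\prod_j 2^{t_j}=2^t$) while the splitter and per-block overheads contribute only the $t^{O(\log t)}=2^{o(t)}$ factor. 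Your composition step and your diagnosis of why the naive perfect-hash route gives $2^{(1+\log_2 e)t}$ are both correct; what is missing is the realization that the remedy is to give up linear-range injectivity altogether, not to build it more cleverly.
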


\vspace{-0.2em}

{\noindent We now present \alg{ColorNodes} (see the pseudocode below), a procedure that given an input ($G,\!k,\!p,\!q$), where ($G,\!k,\!p$) is an instance of {\sc Max $(k,\!n\!-\!k)$ Cut} and $q\!=\!k+p$, returns a set of instances of {\sc NC-Max $(k,\!n\!-\!k)$-Cut}. Procedure \alg{ColorNodes} first constructs a $(|V|,\!k\!+\!p)$-universal set $\cal F$. For each $f\!\in\!{\cal F}$, \alg{ColorNodes} generates a colored copy $V^f$ of $V$. Then, \alg{ColorNodes} returns a set $\cal I$, including the resulting instances of {\sc NC-Max $(k,\!n\!-\!k)$-Cut}.}

\renewcommand{\thealgorithm}{}
\floatname{algorithm}{Procedure}
\begin{algorithm}[!ht]
\caption{\alg{ColorNodes}($G\!=\!(V,E),k,p,q$)}
\begin{algorithmic}[1]
\STATE let $V\!=\!\{v_1,v_2,\ldots,v_{|V|}\}$.
\STATE\label{step:colornodes2} ${\cal F}\Leftarrow$ \alg{UniSetAlg}$(|V|,q)$.

\FORALL{$f\!\in\!{\cal F}$}
	\STATE let $V^f\!=\!\{v^f_1,v^f_2,\ldots,v^f_{|V|}\}$, where $v^f_i$ is a copy of $v_i$.
	\FOR{$i=1,2,\ldots,|V|$}
		\STATE {\bf if} $f(i)\!=\!0$ {\bf then} color $v^f_i$ red. {\bf else} color $v^f_i$ blue. {\bf end if}
	\ENDFOR
\ENDFOR

\STATE {\bf return} ${\cal I}=\{(G_f\!=\!(V_f,E),k,p): f\!\in\!{\cal F}\}$.
\end{algorithmic}
\end{algorithm}

{\noindent The next lemma states the correctness of procedure \alg{ColorNodes}.}

\vspace{-0.2em}

\begin{lemma}\label{lemma:colornodescor}
An instance $(G,\!k,\!p)$ of {\sc Max $(k,\!n\!-\!k)$-Cut} is a yes-instance iff \alg{ColorNodes}($G,\!k,\!p,\!k\!+\!p$) returns a set ${\cal I}$ containing at least one yes-instance of {\sc NC-Max $(k,\!n\!-\!k)$-Cut}.
\end{lemma}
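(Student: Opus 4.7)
The plan is to prove the two directions separately, with the forward direction being the substantive one that motivates the choice of universal-set parameter $q = k+p$.

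For the forward direction, I would start with a yes-certificate $X \subseteq V$, $|X|=k$, with $|E(X, V\setminus X)| \geq p$. The key observation is that we can exhibit a yes-instance of NC-Max-$(k,n-k)$-Cut by coloring all of $X$ red and, in addition, coloring at least $p$ specific ``witness'' neighbors blue. Concretely, pick an arbitrary $p$-subset $E' \subseteq E(X, V\setminus X)$ and let $Y \subseteq V\setminus X$ be the set of endpoints of $E'$ that lie outside $X$; then $|Y|\leq p$. Set $I_0 = \{i : v_i \in X \cup Y\}$ and define $f' : I_0 \to \{0,1\}$ by $f'(i)=0$ if $v_i\in X$ and $f'(i)=1$ if $v_i\in Y$. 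Since $|I_0|\leq k+p = q$, extend $I_0$ arbitrarily to a set $I$ of size exactly $q$ and extend $f'$ arbitrarily on $I\setminus I_0$. By the $(|V|,q)$-universal property of $\cal F$ produced by \alg{UniSetAlg}, some $f \in \cal F$ agrees with $f'$ on $I$. In the resulting colored copy $V^f$, every vertex corresponding to $X$ is red and every vertex corresponding to $Y$ is blue, so the image of $X$ in $V^f$ is a set of $k$ red nodes (with no blue nodes), and the $p$ edges in $E'$ all have a blue endpoint. Hence $(G_f,k,p)$ is a yes-instance of NC-Max-$(k,n-k)$-Cut.

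For the backward direction, I would simply observe that coloring is a purely syntactic decoration: if some $f \in \cal F$ yields $(G_f,k,p)$ as a yes-instance of NC-Max-$(k,n-k)$-Cut, witnessed by a $k$-subset $X^f \subseteq V^f$ of red nodes such that at least $p$ edges of $E(X^f, V^f \setminus X^f)$ have a blue endpoint, then the preimage $X \subseteq V$ still has $|X|=k$ and $|E(X, V\setminus X)|\geq p$ (the edge set $E$ is identical in $G$ and $G_f$). Thus $(G,k,p)$ is a yes-instance of {\sc Max $(k,n-k)$-Cut}.

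The main obstacle I anticipate is bookkeeping around the universal set, in particular the requirement that the distinguished index set has size exactly $t=q$. The natural set $X \cup Y$ may have size strictly less than $k+p$ (for example when two edges of $E'$ share a blue endpoint, or when $p=0$). The fix is the padding step above, which is legitimate because we only care that $f$ assigns $0$ on the indices of $X$ and $1$ on the indices of $Y$, and the universal-set guarantee holds for \emph{every} extension $f'$ on any size-$q$ index set. Once this is handled, the rest of the argument is immediate and does not require any further structural property of the coloring.
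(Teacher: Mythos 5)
Your proposal is correct and follows essentially the same route as the paper: both directions hinge on choosing a witness set $Y$ of at most $p$ outside endpoints of $p$ cut edges and invoking the $(|V|,k+p)$-universal set to find a coloring making $X$ red and $Y$ blue, with the converse being immediate since coloring cannot create new cut edges. Your explicit padding of the index set to size exactly $q$ is a small technical point the paper glosses over, but it does not change the argument.
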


\vspace{-1em}

\begin{proof}
If $(G,\!k,\!p)$ is a no-instance of {\sc  Max $(k,\!n\!-\!k)$-Cut}, then clearly, for any coloring of the nodes in $V$, we get a no-instance of {\sc NC-Max $(k,\!n\!-\!k)$-Cut}. Next suppose that $(G,\!k,\!p)$ is a yes-instance, and let $X$ be a set of $k$ nodes in $V$ such that $|E(X,\!V\!\setminus\!X)|\!\geq\!p$. Note that there is a set $Y$ of at most $p$ nodes in $V\!\setminus\!X$ such that $|E(X,\!Y)|\!\geq\!p$. Let $X'$ and $Y'$ denote the indices of the nodes in $X$ and $Y$, respectively. Since ${\cal F}$ is a $(|V|,\!k\!+\!p)$-universal set, there is $f\!\in\!{\cal F}$ such that: (1) for all $i\!\in\!X'$, $f(i)\!=\!0$, and (2) for all $i\!\in\!Y'$, $f(i)\!=\!1$. Thus, in $G_f$, the copies of the nodes in $X$ are red, and the copies of the nodes in $Y$ are blue. We get that $(G_f,\!k,\!p)$ is a yes-instance of {\sc NC-Max $(k,\!n\!-\!k)$-Cut}.\qed
\end{proof}

\vspace{-0.1em}

{\noindent Furthermore, Lemma \ref{lemma:splitter} immediately implies the following result.}

\vspace{-0.2em}

\begin{lemma}\label{lemma:colornodestime}
Procedure \alg{ColorNodes} runs in time $O(2^{q+o(q)}|V|\log\!|V|)$, and returns a set ${\cal I}$ of size $O(2^{q+o(q)}\log\!|V|)$.
\end{lemma}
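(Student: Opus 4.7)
The plan is to read the running time directly off the pseudocode, with the construction of the universal set being the dominant cost. I would first invoke Lemma \ref{lemma:splitter} to bound the cost and size of the call \alg{UniSetAlg}$(|V|,q)$ in step~\ref{step:colornodes2}: this yields a set ${\cal F}$ of size $2^{q+o(q)}\log|V|$ in time $O(2^{q+o(q)}|V|\log|V|)$. Since $|{\cal I}|=|{\cal F}|$ by construction (each $f\!\in\!{\cal F}$ contributes exactly one colored copy), the size bound on $\cal I$ follows immediately.

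For the running time, I would then argue that the outer loop over $f\in{\cal F}$ performs, for each $f$, an $O(|V|)$ amount of work: it creates the copy $V^f$ of the $|V|$ vertices and scans through the indices $i=1,\ldots,|V|$, coloring each $v^f_i$ according to $f(i)$ in constant time. Hence the total cost of the loop is $O(|{\cal F}|\cdot |V|)=O(2^{q+o(q)}|V|\log|V|)$. Summing with the universal-set construction (which has the same asymptotic cost) and the trivial preprocessing in step~1 gives the claimed bound $O(2^{q+o(q)}|V|\log|V|)$.

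There is no real obstacle here; this is essentially a bookkeeping lemma that repackages Lemma \ref{lemma:splitter}. The only small point to keep in mind is that one must be careful that the per-$f$ work is linear in $|V|$ (rather than, say, $|V|+|E|$), since the graph's edge set $E$ is shared across all copies $G_f=(V_f,E)$ and need not be duplicated. With that observation, both bounds fall out directly from Lemma \ref{lemma:splitter}.
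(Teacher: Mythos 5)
Your proof is correct and matches the paper, which simply states that the lemma follows immediately from Lemma~\ref{lemma:splitter}; your accounting of the $O(|V|)$ work per function $f\in{\cal F}$ and the observation that $|{\cal I}|=|{\cal F}|$ is exactly the intended (and only) content of the argument.
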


\vspace{-0.2em}

\mysubsection{A Procedure for {\sc NC-Max $(k,\!n\!-\!k)$-Cut}}\label{section:ncmcut}

\vspace{-0.2em}

We now present \alg{SolveNCMaxCut}, a procedure for solving {\sc NC-Max $(k,\!n\!-\!k)$-Cut}

{\noindent (see the pseudocode below). Procedure \alg{SolveNCMaxCut} orders the red nodes in $V$
 according to the number of their blue neighbors in a non-increasing manner. If there are at least $k$ red nodes,
 and the number of edges between the first $k$ red nodes and blue nodes is at least $p$, procedure \alg{SolveNCMaxCut}
 accepts; otherwise, procedure \alg{SolveNCMaxCut} rejects.}

\renewcommand{\thealgorithm}{}
\floatname{algorithm}{Procedure}
\begin{algorithm}[!ht]
\caption{\alg{SolveNCMaxCut}($G\!=\!(V,E),k,p$)}
\begin{algorithmic}[1]
\STATE {\bf for all} red $v\!\in\! V$ {\bf do} compute the number $n_b(v)$ of blue neighbors of $v$ in $G$. {\bf end~for}

\STATE let $v_1,\!v_2,\!\ldots,\!v_r$, for some $0\leq r\leq |V|$, denote the red nodes in $V$, such that $n_b(v_i)\geq n_b(v_{i+1})$ for all $1\!\leq\! i\!\leq\! r\!-\!1$.

\STATE {\bf accept} iff ($r\!\geq\!k$ and $\sum_{i=1}^kn_b(v_i)\!\geq\! p$).
\end{algorithmic}
\end{algorithm}

Clearly, the following result concerning \alg{SolveNCMaxCut} is correct.

\begin{lemma}\label{lemma:ncmcut}
Procedure \alg{SolveNCMaxCut} solves {\sc NC-Max $(k,\!n\!-\!k)$-Cut} in time $O(|V|\!\log|V|\!+\!|E|)$.
\end{lemma}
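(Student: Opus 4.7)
The plan is to verify both correctness and the running time of \alg{SolveNCMaxCut}, with the bulk of the work going into the (short) correctness argument. The key observation is that since every candidate set $X$ is required to consist entirely of red nodes, the edges of $E(X,V\setminus X)$ with a blue endpoint are precisely the edges going from $X$ to blue nodes outside $X$, and each such edge is counted once by any vertex of $X$ it is incident to. Consequently, for any $k$-subset $X$ of red nodes,
\[
\bigl|\{e\in E(X,V\setminus X): e\text{ has a blue endpoint}\}\bigr| \;=\; \sum_{v\in X} n_b(v),
\]
because blue neighbors of $v\in X$ all lie in $V\setminus X$ (no blue node belongs to $X$) and different vertices of $X$ can never share a blue neighbor contribution twice to the same edge.

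Given this equality, maximizing the objective over $k$-subsets of red vertices reduces to choosing the $k$ red vertices with the largest values of $n_b(\cdot)$, which is exactly what \alg{SolveNCMaxCut} does after sorting. Hence the instance is a yes-instance iff there exist at least $k$ red vertices and the sum of the top $k$ values $n_b(v_1)+\cdots+n_b(v_k)$ is at least $p$, matching the acceptance condition of the procedure.

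For the running time, computing $n_b(v)$ for every red node can be done in one pass over the edge set in time $O(|V|+|E|)$ (initialize counters and, for each edge $\{u,v\}$, increment the counter of the red endpoint whenever the other endpoint is blue). Sorting the at most $|V|$ red vertices by $n_b$ takes $O(|V|\log|V|)$, and the final summation is $O(k)=O(|V|)$. Adding these up gives the claimed $O(|V|\log|V|+|E|)$ bound.

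The main (mild) subtlety is simply making the double-counting argument precise: one must note that for $v\in X$ every blue neighbor of $v$ lies outside $X$, so each contribution $n_b(v)$ counts distinct edges of $E(X,V\setminus X)$, and different $v\in X$ contribute to different edges (since every blue-endpoint edge in the cut has a \emph{unique} endpoint in $X$). Everything else is straightforward bookkeeping.
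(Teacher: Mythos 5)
Your proof is correct and is exactly the argument the paper has in mind: the paper simply asserts the lemma as ``clearly'' correct without a written proof, and your identity $\sum_{v\in X} n_b(v) = |\{e\in E(X,V\setminus X): e \text{ has a blue endpoint}\}|$ (valid because $X$ contains only red nodes, so no blue neighbor of a vertex of $X$ lies in $X$ and each cut edge has a unique endpoint in $X$) is precisely the observation that justifies the greedy selection of the top-$k$ red vertices. The running-time accounting also matches the claimed bound.
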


\mysubsection{An Algorithm for {\sc Max $(k,n\!-\!k)$ Cut}}\label{section:cutalg}

Assume w.l.o.g that $G$ has no isolated nodes.
Our algorithm, \alg{MaxCutAlg}, for {\sc Max $(k,n\!-\!k)$ Cut},  proceeds as follows.
 First, if $p\!<\!\min\{k,|V|\!-\!k\}$, 
then \alg{MaxCutAlg} accepts, and if $|V|\!-\!k < k$, then \alg{MaxCutAlg} calls itself
 with $|V|\!-\!k$ instead of $k$. Then, \alg{MaxCutAlg} calls  \alg{ColorNodes} to compute a
 set of instances of {\sc NC-Max $(k,\!n\!-\!k)$-Cut}, and accepts iff \alg{SolveNCMaxCut} 
accepts at least one of them.

\renewcommand{\thealgorithm}{2}
\floatname{algorithm}{Algorithm}
\begin{algorithm}[!ht]
\caption{\alg{MaxCutAlg}($G\!=\!(V,E),k,p$)}
\begin{algorithmic}[1]
\STATE\label{step:cut1} {\bf if} $p < \min\{k,|V|\!-\!k\}$ {\bf then} {\bf accept}. {\bf end if}

\STATE\label{step:cut2} {\bf if} $|V|\!-\!k < k$ {\bf then} {\bf accept} iff \alg{MaxCutAlg}($G,|V|\!-\!k,p$) accepts. {\bf end if}

\STATE\label{step:cut3} ${\cal I}\Leftarrow$ \alg{ColorNodes}($G,k,p,k\!+\!p$).

\FORALL{$(G',k',p')\in{\cal I}$}\label{step:cut4}
	\STATE\label{step:cut5} {\bf if} \alg{SolveNCMaxCut}$(G',k',p')$ accepts {\bf then} {\bf accept}. {\bf end if}
\ENDFOR

\STATE\label{step:cut6} {\bf reject}.
\end{algorithmic}
\end{algorithm}

{\noindent The next lemma implies the correctness of Step \ref{step:cut1} in \alg{MaxCutAlg}.

\begin{lemma}[\cite{localpartition}]\label{lemma:pvsk}
In a graph $G\!=\!(V,\!E)$ having no isolated nodes, there is a subset $X\!\subseteq\!V$ of size $k$ such that $|E(X,\!V\!\setminus\! X)|\geq\min\{k,\!|V|\!-\!k\}$.
\end{lemma}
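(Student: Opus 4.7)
By symmetry (the cut is unchanged when $X$ is swapped with its complement, and $\min\{k,|V|-k\}$ is likewise symmetric in $k$ and $|V|-k$), it suffices to handle the case $k \leq |V|/2$ and exhibit $X$ of size $k$ with $|E(X,V\setminus X)| \geq k$. A first instinct is a probabilistic argument: for a uniformly random $k$-subset, the expected cut equals $|E|\cdot\frac{2k(|V|-k)}{|V|(|V|-1)}$, and the no-isolated-nodes hypothesis gives $|E|\geq |V|/2$, so this expectation is at least $\frac{k(|V|-k)}{|V|-1}$. When $k \leq |V|/2$ this is only about $k/2$ in the worst case (already the perfect matching attains this lower bound while reaching cut exactly $k$ only at specific subsets). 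So averaging is off by a factor of two, and the target $k$ must come from structure.

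My plan is to work with a maximum matching $M$ of $G$. If $|M|\geq k$, I pick one endpoint from each of $k$ matching edges to form $X$; each such edge crosses the cut, giving $|E(X,V\setminus X)|\geq k$. Otherwise $|M|<k$, and the maximality of $M$ implies that the $|V|-2|M|$ unmatched vertices form an independent set, each member of which has all its neighbors inside $V(M)$. I would then let $X$ consist of every unmatched vertex together with $k-(|V|-2|M|)$ matched vertices, choosing at most one endpoint from each of that many matching edges so that the partners stay outside $X$; this fits because $k+|M|\leq |V|$ (since both $k\leq |V|/2$ and $|M|\leq|V|/2$). Each chosen matched vertex contributes its own matching edge to the cut, and provided I can select, for each matching edge, the endpoint not adjacent to any unmatched vertex in $X$, each unmatched vertex in $X$ also contributes at least one cut edge via its surviving outside neighbor, for a total of at least $k$.

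The main obstacle, and the technical heart of the argument, is the corner configuration in which both endpoints $a,b$ of some matching edge are adjacent to unmatched vertices, so neither side can be declared ``safe.'' Here, however, the maximality of $M$ forbids the length-three augmenting path $u-a-b-u'$ for distinct unmatched $u,u'$, which forces at most one unmatched vertex to be adjacent to the pair $(a,b)$, and that vertex's other neighbor lies outside $X$ no matter which side is chosen; so the cut edge survives. Folding this bookkeeping into the previous construction completes the proof. Since the lemma is attributed to \cite{localpartition}, I would cite their detailed case analysis rather than redo the full verification here.
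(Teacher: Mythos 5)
The paper does not actually prove this lemma---it is imported verbatim from \cite{localpartition}---so there is no in-paper argument to compare against; your matching-based proof has to stand on its own. Most of it does: the symmetry reduction to $k\le |V|/2$, the case $|M|\ge k$ (one endpoint per matching edge gives $k$ crossing matching edges), and the augmenting-path analysis of the configuration where both endpoints of a matching edge have unmatched neighbours (which indeed forces a single unmatched vertex adjacent to both, so that vertex keeps the unchosen endpoint as an outside neighbour) are all correct.

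There is, however, one genuine gap: your second case implicitly assumes that the number of unmatched vertices, $|V|-2|M|$, is at most $k$, since you place \emph{all} of them in $X$ and top up with $k-(|V|-2|M|)$ matched vertices. That quantity can be negative even under your hypotheses $|M|<k\le |V|/2$: in the star $K_{1,9}$ with $k=2$ one has $|M|=1<k$ but $|V|-2|M|=8>k$, so the construction is undefined. The inequality $k+|M|\le |V|$ that you verify only bounds from above the number of matched vertices you are allowed to add (one per matching edge); it does not guarantee that the number you need is nonnegative. The missing case is easy to close: if $|V|-2|M|\ge k$, take $X$ to be any $k$ unmatched vertices. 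They form an independent set whose neighbourhoods lie entirely in $V(M)\subseteq V\setminus X$, and since $G$ has no isolated nodes each such vertex contributes a cut edge with a distinct endpoint in $X$, so $|E(X,V\setminus X)|\ge k$. With that case added (and the per-edge ``safe endpoint'' selection written out so that every unmatched vertex in $X$ provably retains a neighbour outside $X$), the argument is complete; deferring the remaining bookkeeping to the case analysis in \cite{localpartition} is reasonable but should not be used to paper over this missing branch.
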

Our main result is the following.

\begin{theorem}
Algorithm \alg{MaxCutAlg} solves {\sc Max $(k,\!n\!-\!k)$ Cut} in time $O(4^{p\!+\!o(p)}\cdot$\\$(|V|+\!|E|)\log^2|V|)$.
\end{theorem}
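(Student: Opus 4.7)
The plan is to separately establish correctness and the running time bound, with the latter pivoting on the observation that the preprocessing in Steps~\ref{step:cut1} and~\ref{step:cut2} forces $p \geq k$ before \alg{ColorNodes} is invoked; this is precisely what converts the ``$2^{k+p}$'' dependence naturally produced by universal sets into the target ``$4^p$''.

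For correctness, I would first handle Step~\ref{step:cut1} using Lemma~\ref{lemma:pvsk}: since $G$ has no isolated nodes, whenever $p < \min\{k,|V|-k\}$ that lemma directly produces a size-$k$ subset whose cut value is at least $p$, so accepting is sound. For Step~\ref{step:cut2} I would invoke the self-symmetry $|E(X,V\!\setminus\!X)| = |E(V\!\setminus\!X,X)|$, which makes $(G,k,p)$ equivalent to $(G,|V|-k,p)$; the recursion does not cascade, since in the recursive call the new ``$k$'' equals $|V|-k$, and then $|V|-(|V|-k) = k > |V|-k$, so the guard of Step~\ref{step:cut2} no longer triggers. Once Step~\ref{step:cut3} is reached, correctness follows immediately by chaining Lemma~\ref{lemma:colornodescor} (the original instance is a yes-instance iff some instance in $\cal I$ is a yes-instance of {\sc NC-Max $(k,n\!-\!k)$-Cut}) with Lemma~\ref{lemma:ncmcut} (each such instance is correctly decided by \alg{SolveNCMaxCut}).

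For the running time, the crucial inequality is $q = k+p \leq 2p$. Whenever Step~\ref{step:cut3} executes, Step~\ref{step:cut2} did not recurse, hence $k \leq |V|-k$; and Step~\ref{step:cut1} did not accept, hence $p \geq \min\{k,|V|-k\} = k$. Inserting $q \leq 2p$ into Lemma~\ref{lemma:colornodestime} bounds the cost of \alg{ColorNodes} by $O(2^{2p+o(p)}|V|\log|V|) = O(4^{p+o(p)}|V|\log|V|)$ and the size of $\cal I$ by $O(4^{p+o(p)}\log|V|)$. Multiplying the latter by the $O(|V|\log|V|+|E|)$ cost of each \alg{SolveNCMaxCut} call (Lemma~\ref{lemma:ncmcut}) and adding the cost of \alg{ColorNodes} yields the claimed $O(4^{p+o(p)}(|V|+|E|)\log^2|V|)$ total.

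The main subtlety I anticipate is not computational but structural: neither preprocessing step alone suffices to ensure $k \leq p$ at Step~\ref{step:cut3}---one has to use both, in conjunction with the no-isolated-nodes assumption underwriting Lemma~\ref{lemma:pvsk}---and one must verify that the Step~\ref{step:cut2} recursion cannot cascade. All heavier machinery (universal-set construction, the reduction to the node-colored variant, and the greedy solver for it) is already encapsulated in the cited lemmas, so the remaining work is essentially bookkeeping.
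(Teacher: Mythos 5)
Your proposal is correct and follows essentially the same route as the paper: correctness from Lemma~\ref{lemma:pvsk}, the symmetry $(G,k,p)\equiv(G,|V|-k,p)$, and Lemmas~\ref{lemma:colornodescor} and~\ref{lemma:ncmcut}; the running time from the observation that once Steps~\ref{step:cut1} and~\ref{step:cut2} have been passed, $k\leq\min\{k,|V|-k\}\leq p$, so $2^{k+p}\leq 4^{p}$. The paper states this more tersely (writing $m=\min\{k,|V|-k\}$ and bounding $2^{m+p+o(m+p)}$), but the argument is the same.
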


\begin{proof}
Clearly, ($G,k,p$) is a yes-instance iff ($G,|V|\!-\!k,p$) is a yes-instance. Thus, Lemmas \ref{lemma:colornodescor}, \ref{lemma:ncmcut} and \ref{lemma:pvsk} immediately imply the correctness of \alg{MaxCutAlg}.

Denote $m\!=\!\min\{k,\!|V|\!-\!k\}$. If $p \!<\! m$, then \alg{MaxCutAlg} runs in time $O(1)$. Next suppose that $p \!\geq\! m$. Then, by Lemmas \ref{lemma:colornodestime} and \ref{lemma:ncmcut}, \alg{MaxCutAlg} runs in time $O(2^{m+p+o(m+p)}(|V|\!+\!|E|)\log^2|V|)=O(4^{p+o(p)}(|V|\!+\!|E|)\log^2|V|)$.\qed
\end{proof}

\mysection{Solving Positive Min-FGPPs in Time $O^*\!(2^{k+\frac{p}{\alpha_2}+o(k+p)})$}\label{section:res2b}

Let $\Pi$ be a  min-FGPP satisfying $\alpha_1\!\geq\!0$ and $\alpha_2\!>\!0$. In this section we develop an $O^*\!(2^{k+\frac{p}{\alpha_2}+o(k+p)})$ time algorithm for $\Pi$.
 Using randomized separation, we show in Section \ref{section:possimple} that we can focus on an 
easier version of $\Pi$. We solve this version in Section \ref{section:ncpos}, using dynamic programming. Then, 
Section \ref{section:posalg} gives our algorithm.

\mysubsection{Simplifying the Positive Min-FGPP $\Pi$}\label{section:possimple}

We first define an easier variant of $\Pi$. Given a graph $G\!=\!(V,\!E)$ in which each node is either red or blue, and parameters $k\!\in\!\mathbb{N}$ and $p\!\in\!\mathbb{R}$, {\sc NC-$\Pi$} asks if there is a subset $X\!\subseteq\! V$ of exactly $k$ red nodes and no blue nodes, whose neighborhood outside $X$ includes only blue nodes, such that val$(X)\!\leq\!p$.

The simplification process is similar to that performed in Section \ref{section:mcsimple}. However, we now use the randomized
 separation procedure \alg{ColorNodes}, defined in Section \ref{section:mcsimple}, with instances of $\Pi$, and consider the set ${\cal I}$ returned by \alg{ColorNodes} as a set of instances of {\sc NC-$\Pi$}. We next prove that \alg{ColorNodes} is correct.

\begin{lemma}\label{lemma:colornodesPcor}
An instance $(G,\!k,\!p)$ of {\sc $\Pi$} is a yes-instance iff \alg{ColorNodes}($G,\!k,\!p,\!k\!+\!\frac{p}{\alpha_2}$) returns a set ${\cal I}$ containing at least one yes-instance of {\sc NC-$\Pi$}.
\end{lemma}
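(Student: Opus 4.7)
My plan is to mirror the proof of Lemma~\ref{lemma:colornodescor}, replacing the ad hoc bound $|Y|\le p$ used there with a bound on the neighborhood $N(X)\setminus X$ tailored to positive min-FGPPs. The easy direction is: if some $(G_f,k,p)\in{\cal I}$ is a yes-instance of NC-$\Pi$ witnessed by a set $X$, then the corresponding uncolored set in $V$ has size $k$ and value $\mathrm{val}(X)\le p$ in $G$, so $(G,k,p)$ is a yes-instance of $\Pi$.

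For the nontrivial direction, suppose $(G,k,p)$ is a yes-instance and let $X\subseteq V$ with $|X|=k$ and $\mathrm{val}(X)=\alpha_1|E(X)|+\alpha_2|E(X,V\setminus X)|\le p$. The key observation, where the assumptions $\alpha_1\ge 0$ and $\alpha_2>0$ are used, is that
\[
\alpha_2\,|E(X,V\setminus X)|\;\le\;\mathrm{val}(X)\;\le\;p,
\]
hence $|E(X,V\setminus X)|\le p/\alpha_2$. Since every vertex of $Y:=N(X)\setminus X$ contributes at least one edge to $E(X,V\setminus X)$, we get $|Y|\le p/\alpha_2$, and therefore $|X\cup Y|\le k+p/\alpha_2 = q$.

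Now I would invoke the universal set property of ${\cal F}$ computed in step~\ref{step:colornodes2} of \alg{ColorNodes}. Let $I\subseteq\{1,\ldots,|V|\}$ index the nodes of $X\cup Y$ (padded arbitrarily if $|I|<q$) and define $f'\colon I\to\{0,1\}$ by $f'(i)=0$ when $v_i\in X$ and $f'(i)=1$ when $v_i\in Y$. By Lemma~\ref{lemma:splitter} applied with parameter $q$, some $f\in{\cal F}$ agrees with $f'$ on $I$. In the resulting $G_f$, the copies of nodes of $X$ are red, the copies of nodes of $Y$ are blue, and (since $Y$ is the \emph{entire} neighborhood of $X$ outside $X$ in $G$) every neighbor of a red node of $X$'s copy that lies outside $X$'s copy is blue. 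As the value function $\mathrm{val}$ ignores colors, $\mathrm{val}(X)\le p$ still holds in $G_f$, so $(G_f,k,p)$ is a yes-instance of NC-$\Pi$.

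The main point where care is required is the bound on $|Y|$: the positivity $\alpha_1\ge 0$ is precisely what prevents negative contributions from interior edges of $X$ from hiding a large boundary, and $\alpha_2>0$ is what lets us divide by $\alpha_2$. A minor technical issue is that $q=k+p/\alpha_2$ need not be integral, but we may pass $\lceil q\rceil$ to \alg{UniSetAlg} without affecting the asymptotics, and the inequality $|X\cup Y|\le\lceil q\rceil$ remains valid so that the universal set argument goes through unchanged.
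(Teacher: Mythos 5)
Your proposal is correct and follows essentially the same route as the paper: the forward direction is immediate since \alg{ColorNodes} does not change the underlying graph, and the reverse direction bounds the outside neighborhood $Y$ of a solution $X$ by $|Y|\le p/\alpha_2$ (using $\alpha_1\ge 0$ and $\alpha_2>0$) and then applies the $(|V|,k+\frac{p}{\alpha_2})$-universal set to find a coloring making $X$ red and $Y$ blue. The only differences are cosmetic: you spell out the derivation of the bound on $|Y|$ and the integrality of the universal-set parameter, both of which the paper leaves implicit.
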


\begin{proof}
If $(G,\!k,\!p)$ is a no-instance of {\sc $\Pi$}, then clearly, for any coloring of the nodes in $V$, we get a no-instance of {\sc NC-$\Pi$}. Next suppose that $(G,\!k,\!p)$ is a yes-instance, and let $X$ be a set of $k$ nodes in $V$ such that val$(X)\!\leq\!p$. Let $Y$ denote the neighborhood of $X$ outside $X$. Note that $|Y|\!\leq\! \frac{p}{\alpha_2}$. Let $X'$ and $Y'$ denote the indices of the nodes in $X$ and $Y$, respectively. Since ${\cal F}$ is a $(|V|,\!k\!+\!\frac{p}{\alpha_2})$-universal set, there is $f\!\in\!{\cal F}$ such that: (1) for all $i\!\in\!X'$, $f(i)\!=\!0$, and (2) for all $i\!\in\!Y'$, $f(i)\!=\!1$. Thus, in $G_f$, the copies of the nodes in $X$ are red, and the copies of the nodes in $Y$ are blue. We get that $(G_f,\!k,\!p)$ is a yes-instance of {\sc NC-$\Pi$}.\qed
\end{proof}

\mysubsection{A Procedure for {\sc NC-$\Pi$}}\label{section:ncpos}

We now present \alg{SolveNCP}, a dynamic programming-based procedure for solving {\sc NC-$\Pi$} (see the pseudocode below). Procedure \alg{SolveNCP} first computes the node-sets of the maximal connected red components in $G$. Then, procedure \alg{SolveNCP} generates a matrix M, where each entry $[i,j]$ holds the minimum value val$(X)$ of a subset $X\!\subseteq\! V$ in $Sol_{i,j}$, the family containing  every set of exactly $j$ nodes in $V$ obtained by choosing a union of sets in $\{C_1,C_2\ldots,C_i\}$, i.e., $Sol_{i,j}\!=\!\{(\bigcup{\cal C}'): {\cal C}'\subseteq \{C_1,C_2,\ldots,C_i\},|\bigcup{\cal C}'|=j\}$. Procedure \alg{SolveNCP} computes M by using dynamic programming, assuming an access to a non-existing entry returns $\infty$, and accepts iff M$[t,k]\!\leq\!p$.

\renewcommand{\thealgorithm}{}
\floatname{algorithm}{Procedure}
\begin{algorithm}[!ht]
\caption{\alg{SolveNCP}($G\!=\!(V,E),k,p$)}
\begin{algorithmic}[1]
\STATE\label{step:ncp1} use DFS to compute the family ${\cal C}\!=\!\{C_1,C_2,\ldots,C_t\}$, for some $0\!\leq\! t\!\leq\! |V|$, of the node-sets of the maximal connected red components in $G$.

\STATE let M be a matrix containing an entry $[i,j]$ for all $0\!\leq\! i\!\leq\! t$ and $0\!\leq\! j\!\leq\! k$.
\STATE initialize M$[i,0]\Leftarrow 0$ for all $0\!\leq\!i\!\leq\! t$, and M$[0,j]\Leftarrow\infty$ for all $1\!\leq\!j\!\leq\!k$.

\FOR{$i\!=\!1,2,\ldots,t,$ and $j\!=\!1,2,\ldots,k$}
	\STATE M$[i,j]\Leftarrow\min\{\mathrm{M}[i\!-\!1,j],\mathrm{M}[i\!-\!1,j\!-\!|C_i|]+\mathrm{val}(C_i)\}$.
\ENDFOR

\STATE {\bf accept} iff M$[t,k]\!\leq\!p$.
\end{algorithmic}
\end{algorithm}

{\noindent The following lemma states the correctness and running time of \alg{SolveNCP}.}

\begin{lemma}\label{lemma:ncp}
Procedure \alg{SolveNCP} solves {\sc NC-$\Pi$} in time $O(|V|k\!+\!|E|)$.
\end{lemma}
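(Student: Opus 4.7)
The plan is to verify that Procedure \alg{SolveNCP} solves {\sc NC-$\Pi$} correctly, and then to analyze its running time. The key structural observation is that any feasible solution $X$ of {\sc NC-$\Pi$} must be a union of \emph{entire} maximal connected red components of $G$. Indeed, $X$ consists of red nodes only, and its outside-neighborhood contains only blue nodes; hence if $v\!\in\! X$ is red and $u\!\in\!V\!\setminus\!X$ is a red neighbor of $v$, we reach a contradiction. Therefore every red neighbor of a vertex of $X$ lies in $X$, which means that $X$ is a union of maximal connected red components. Conversely, any union $X\!=\!C_{i_1}\!\cup\!\cdots\!\cup\! C_{i_s}$ of such components consists solely of red nodes, and all its outside-neighbors are blue (a red outside-neighbor would merge two distinct maximal red components). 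Thus the feasible solutions of {\sc NC-$\Pi$} are exactly the elements of $\bigcup_j Sol_{t,j}$ with $j\!=\!k$.

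The next step is to establish the additive decomposition $\mathrm{val}(X)\!=\!\sum_{\ell=1}^s\mathrm{val}(C_{i_\ell})$ for such a union $X$. Since distinct maximal connected red components are pairwise non-adjacent in $G$ (any connecting edge would force them to merge), $E(X)$ is the disjoint union $\bigsqcup_\ell E(C_{i_\ell})$, so $|E(X)|\!=\!\sum_\ell|E(C_{i_\ell})|$. Moreover, for each $\ell$, every edge in $E(C_{i_\ell},V\!\setminus\! C_{i_\ell})$ either goes to a blue node, and therefore to $V\!\setminus\! X$, or goes to a red node in a different component $C_{i_{\ell'}}$ -- but we just argued that no such edge exists. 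Hence $E(C_{i_\ell},V\!\setminus\! C_{i_\ell})\!=\!E(C_{i_\ell},V\!\setminus\! X)$, and these sets are pairwise disjoint over $\ell$, yielding $|E(X,V\!\setminus\! X)|\!=\!\sum_\ell |E(C_{i_\ell},V\!\setminus\! C_{i_\ell})|$. Multiplying by $\alpha_1$ and $\alpha_2$ gives the claimed identity.

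With these two observations in hand, the correctness of the dynamic program is a standard subset-sum-style induction on $i$: M$[i,j]$ equals the minimum of $\sum_\ell\mathrm{val}(C_\ell)$ taken over subfamilies of $\{C_1,\ldots,C_i\}$ whose total cardinality equals $j$ (or $\infty$ if no such subfamily exists), since the recurrence encodes the choice of whether to include $C_i$. In particular M$[t,k]$ equals the minimum of $\mathrm{val}(X)$ over all $X\!\in\! Sol_{t,k}$, which by the first paragraph is exactly the minimum over all feasible solutions of {\sc NC-$\Pi$}. Hence \alg{SolveNCP} accepts iff the instance is a yes-instance.

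For the running time, Step \ref{step:ncp1} runs DFS on the red-induced subgraph of $G$ in $O(|V|\!+\!|E|)$, and the values $\mathrm{val}(C_i)$ can be accumulated during this DFS by scanning, for each red node, its incident edges and attributing each edge to its (unique) component, in total $O(|V|\!+\!|E|)$ time. The matrix M has $(t\!+\!1)(k\!+\!1)\!=\!O(|V|k)$ entries and each is filled in $O(1)$, so the DP takes $O(|V|k)$ time. The overall running time is thus $O(|V|k\!+\!|E|)$, as claimed. The only subtlety worth being careful about is the additive decomposition of $\mathrm{val}$ over components, which is where the non-existence of inter-component edges is essential; everything else is routine.
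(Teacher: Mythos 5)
Your proof is correct and follows the same route as the paper: the paper's argument is just a terse version of yours, asserting by ``a simple induction'' that $\mathrm{M}[i,j]=\min_{X\in Sol_{i,j}}\mathrm{val}(X)$ and that the instance is a yes-instance iff this quantity at $[t,k]$ is at most $p$, plus the same running-time accounting. You have merely made explicit the two structural facts the paper leaves implicit --- that feasible solutions of {\sc NC-$\Pi$} are exactly the unions of entire maximal connected red components of total size $k$, and that $\mathrm{val}$ is additive over such components because distinct maximal red components are non-adjacent --- which is a faithful filling-in rather than a different approach.
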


\begin{proof}
For all $0\!\leq\! i\!\leq\! t$ and $0\!\leq\! j\!\leq\! k$, denote val$(i,\!j)\!=\!\min_{X\in Sol_{i,j}}\{\mathrm{val}(X)\}$. Using a simple induction on the computation of M, we get that M$[i,\!j]\!=\!\mathrm{val}(i,\!j)$. Since $(G,\!k,\!p)$ is a yes-instance of {\sc NC-$\Pi$} iff val$(t,\!k)\!\leq\! p$, we have that \alg{SolveNCP} is correct. Step \ref{step:ncp1}, and the computation of val$(C)$ for all $C\!\in\!{\cal C}$, are performed in time $O(|V|\!+\!|E|)$. Since M is computed in time $O(|V|k)$, we have that \alg{SolveNCP} runs in time $O(|V|k\!+\!|E|)$.\qed
\end{proof}

\mysubsection{An Algorithm for $\Pi$}\label{section:posalg}

We now conclude \alg{PAlg}, our algorithm for {\sc $\Pi$} (see the pseudocode below). Algorithm \alg{PAlg} calls \alg{ColorNodes} to compute several instances of {\sc NC-$\Pi$}, and accepts iff \alg{SolveNCP} accepts at least one of them.

\renewcommand{\thealgorithm}{3}
\floatname{algorithm}{Algorithm}
\begin{algorithm}[!ht]
\caption{\alg{PAlg}($G\!=\!(V,E),k,p$)}
\begin{algorithmic}[1]
\STATE ${\cal I}\Leftarrow$ \alg{ColorNodes}($G,k,p,k+\frac{p}{\alpha_2}$).

\FORALL{$(G',k',p')\in{\cal I}$}
	\STATE {\bf if} \alg{SolveNCP}$(G',k',p')$ accepts {\bf then} {\bf accept}. {\bf end if}
\ENDFOR

\STATE {\bf reject}.
\end{algorithmic}
\end{algorithm}

{\noindent By Lemmas \ref{lemma:colornodestime}, \ref{lemma:colornodesPcor} and \ref{lemma:ncp}, we have the following result.}
\begin{theorem}
Algorithm \alg{PAlg} solves {\sc $\Pi$} in time $O(2^{k+\frac{p}{\alpha_2}\!+\!o(k+p)}(|V|\!+\!|E|)\!\log\!|V|)$.
\end{theorem}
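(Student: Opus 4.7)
The plan is to derive the theorem as a straightforward composition of the three lemmas already proved in this section, so I would organize the argument in two parts: correctness and running time.

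For correctness, I would observe that \alg{PAlg} accepts iff at least one instance in the set $\cal I$ returned by \alg{ColorNodes}$(G,k,p,k+\frac{p}{\alpha_2})$ is accepted by \alg{SolveNCP}. By Lemma \ref{lemma:ncp}, each call to \alg{SolveNCP} correctly decides whether its argument is a yes-instance of NC-$\Pi$, so \alg{PAlg} accepts iff some element of $\cal I$ is a yes-instance of NC-$\Pi$. By Lemma \ref{lemma:colornodesPcor} applied to the parameter choice $q = k+\frac{p}{\alpha_2}$, this is equivalent to $(G,k,p)$ being a yes-instance of $\Pi$.

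For the running time, I would set $q = k + \frac{p}{\alpha_2}$ and combine the bounds from Lemmas \ref{lemma:colornodestime} and \ref{lemma:ncp}. Lemma \ref{lemma:colornodestime} gives that \alg{ColorNodes} runs in time $O(2^{q+o(q)}|V|\log|V|)$ and produces $|{\cal I}| = O(2^{q+o(q)}\log|V|)$. Lemma \ref{lemma:ncp} gives that each invocation of \alg{SolveNCP} costs $O(|V|k+|E|)$. Multiplying and adding, the total time is $O(2^{q+o(q)}(|V|k+|E|)\log|V|)$, which dominates the initial call to \alg{ColorNodes}. Finally, since $\alpha_2$ is a constant, $o(q) = o(k+p)$, yielding the stated bound.

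The only step requiring a moment of care is handling the extra factor of $k$ coming from \alg{SolveNCP}: since $k \leq q$ and any polynomial factor in $q$ is absorbed into the subexponential term $2^{o(q)}$ (and for $q$ bounded by a constant the whole running time is polynomial in the input size, so the bound trivially holds), we may replace $O(|V|k+|E|)$ by $O(|V|+|E|)$ at the cost of an additional $2^{o(k+p)}$ factor. I do not foresee any substantive obstacle beyond this bookkeeping; the heavy lifting has already been done in establishing Lemmas \ref{lemma:colornodestime}, \ref{lemma:colornodesPcor}, and \ref{lemma:ncp}.
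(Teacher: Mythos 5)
Your proposal is correct and follows exactly the route the paper takes: the paper derives this theorem as an immediate consequence of Lemmas \ref{lemma:colornodestime}, \ref{lemma:colornodesPcor} and \ref{lemma:ncp}, which is precisely your composition of correctness and running-time bounds (your extra care in absorbing the factor $k$ into the $2^{o(k+p)}$ term is a detail the paper leaves implicit).
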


\mysection{Solving a Subclass of Positive Min-LGPPs Faster}\label{section:res3}

Let $\Pi$ be a min-FGPP satisfying $\alpha_2\!\geq\!\frac{\alpha_1}{2}\!>\!0$. Denote $x\!=\!\max\{\frac{p}{\alpha_2},\min\{\frac{p}{\alpha_1},\frac{p}{\alpha_2}\!+\!(1\!-\!\frac{\alpha_1}{\alpha_2})k\}\}$. In this section we develop an $O^*(2^{x+o(x)})$ time algorithm for {\sc $\Pi$}, that is faster than the algorithm in Section \ref{section:res2b}. Applying a divide-and-conquer step to the edges in the input graph $G$, Section \ref{section:newpsimple} shows that
 we can focus on an easier version of {\sc $\Pi$}. This version is solved in Section \ref{section:ecnewp}
 by using dynamic programming. We give the algorithm in  Section \ref{section:newpalg}.

\mysubsection{Simplifying the Non-Degrading Positive Min-FGPP $\Pi$}\label{section:newpsimple}

We first define an easier variant of {\sc $\Pi$}. Suppose we are given a graph $G\!=\!(V,\!E)$ in which each edge is either red or blue, and parameters $k\in\mathbb{N}$ and $p\in\mathbb{R}$. For any subset $X\!\subseteq\!V$, let C$(X)$ denote the family containing the node-sets of the maximal connected components in the graph $G_r\!=\!(X,E_r)$, where $E_r$ is the set of red edges in $E$ having both endpoints in $X$. Also, let val$^*(X)\!=\!\sum_{C\in\mathrm{C}(X)}\mathrm{val}(C)$. The variant {\sc EC-$\Pi$} asks if there is a subset $X\!\subseteq\! V$ of exactly $k$
nodes, such that all the edges in $E(X,V\!\setminus\! X)$ are blue, and val$^*(X)\!\leq\!p$.

We now present a procedure, called \alg{ColorEdges} (see the pseudocode below), whose input is an instance ($G,\!k,\!p$) of {\sc $\Pi$}. Procedure \alg{ColorEdges} uses a universal set to perform several iterations coloring the edges in $G$, and then returns the resulting set of instances of {\sc EC-$\Pi$}.

\renewcommand{\thealgorithm}{}
\floatname{algorithm}{Procedure}
\begin{algorithm}[!ht]
\caption{\alg{ColorEdges}($G\!=\!(V,E),k,p$)}
\begin{algorithmic}[1]
\STATE let $E\!=\!\{e_1,e_2,\ldots,e_{|E|}\}$.
\STATE\label{step:coloredges2} ${\cal F}\Leftarrow$ \alg{UniSetAlg}$(|E|,x)$.

\FORALL{$f\!\in\!{\cal F}$}
	\STATE let $E^f\!=\!\{e^f_1,e^f_2,\ldots,e^f_{|E|}\}$, where $e^f_i$ is a copy of $e_i$.
	\FOR{$i=1,2,\ldots,|E|$}
		\STATE {\bf if} $f(i)\!=\!0$ {\bf then} color $e^f_i$ red. {\bf else} color $e^f_i$ blue. {\bf end if}
	\ENDFOR
\ENDFOR

\STATE {\bf return} ${\cal I}=\{(G_f\!=\!(V,E_f),k,p): f\!\in\!{\cal F}\}$.
\end{algorithmic}
\end{algorithm}

The following lemma states the correctness of \alg{ColorEdges}.

\begin{lemma}\label{lemma:coloredgescor}
An instance $(G,\!k,\!p)$ of {\sc $\Pi$} is a yes-instance iff \alg{ColorEdges}($G,\!k,\!p$) returns a set ${\cal I}$ containing at least one yes-instance of {\sc EC-$\Pi$}.
\end{lemma}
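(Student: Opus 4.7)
My plan is to prove each direction of the biconditional separately, in the style of Lemma~\ref{lemma:colornodesPcor} but with edges in place of nodes and a subtler connectivity requirement on the coloring.

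For the easy direction, assume that some $(G_f,k,p)\in{\cal I}$ is a yes-instance of {\sc EC-$\Pi$}, witnessed by a set $X\subseteq V$ of size $k$. Let $F$ denote the number of edges in $E(X)$ whose endpoints lie in two distinct maximal red components of $X$; by maximality each such edge is blue. A short double-counting of $\sum_{C\in\mathrm{C}(X)}|E(C)|$ and $\sum_{C\in\mathrm{C}(X)}|E(C,V\setminus C)|$, where each intra-component edge contributes once, each inter-component $E(X)$-edge contributes twice, and each edge of $E(X,V\setminus X)$ contributes once, gives
\[
\mathrm{val}^*(X)=\mathrm{val}(X)+(2\alpha_2-\alpha_1)F.
\]
Since $2\alpha_2-\alpha_1\geq 0$ by the hypothesis on $\Pi$ and $F\geq 0$, I conclude $\mathrm{val}(X)\leq\mathrm{val}^*(X)\leq p$, so $X$ solves the original instance of $\Pi$.

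For the converse, let $X$ be any size-$k$ solution to $(G,k,p)$. Let $t$ be the number of connected components of $G[X]$, fix a spanning forest $T\subseteq E(X)$ of $G[X]$ so that $|T|=k-t$, and let $I\subseteq\{1,\ldots,|E|\}$ index the edges in $T\cup E(X,V\setminus X)$. Define a partial coloring $f'\colon I\to\{0,1\}$ that sends each edge of $T$ to $0$ (red) and each edge of $E(X,V\setminus X)$ to $1$ (blue). For any $f\in{\cal F}$ extending $f'$, the instance $G_f$ has every $E(X,V\setminus X)$-edge blue, while $T$ is a red spanning forest of $G[X]$, so the red components of $X$ coincide with the connected components of $G[X]$; hence $F=0$ and $\mathrm{val}^*(X)=\mathrm{val}(X)\leq p$, making $(G_f,k,p)$ a yes-instance of {\sc EC-$\Pi$}. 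By $(|E|,x)$-universality of ${\cal F}$, such an $f$ is guaranteed to exist whenever $|I|\leq x$.

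The main obstacle is therefore the cardinality bound. Writing $a=k-t$ and $b=|E(X,V\setminus X)|$, so that $|I|=a+b$, the inequalities $\alpha_1 a+\alpha_2 b\leq\alpha_1|E(X)|+\alpha_2 b\leq p$, $0\leq a\leq k$, and $b\geq 0$ reduce the task to the linear program $\max\{a+b:\alpha_1 a+\alpha_2 b\leq p,\ 0\leq a\leq k,\ b\geq 0\}$. A brief case analysis yields optimum $p/\alpha_2$ when $\alpha_1>\alpha_2$ (loading all slack on $b$), and optimum $\min\{p/\alpha_1,\ p/\alpha_2+(1-\alpha_1/\alpha_2)k\}$ when $\alpha_1\leq\alpha_2$ (loading slack on $a$ until the cap $a=k$ binds). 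Taking the outer maximum with $p/\alpha_2$ to unify the two cases recovers exactly the definition of $x$, so $|I|\leq x$, completing the proof.
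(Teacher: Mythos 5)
Your proof is correct and follows essentially the same route as the paper's: the forward implication via $\mathrm{val}^*(X)\geq\mathrm{val}(X)$ (which you make quantitative through the identity $\mathrm{val}^*(X)=\mathrm{val}(X)+(2\alpha_2-\alpha_1)F$), and the converse by coloring a spanning forest of $G[X]$ red and the cut edges blue, then invoking $(|E|,x)$-universality. Your only substantive addition is the explicit linear-programming case analysis establishing $|I|\leq x$, a step the paper merely asserts with ``Note that $|E_r'|+|E_b'|\leq x$''; your derivation correctly fills in that omitted justification.
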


\begin{proof}
Since $\alpha_2\!\geq\!\frac{\alpha_1}{2}$, val$^*(X)\!\geq\!\mathrm{val}(X)$ for any set $X\!\subseteq\!V$ and coloring of edges in $E$. Thus, if $(G,\!k,\!p)$ is a no-instance of {\sc $\Pi$}, then clearly, for any coloring of edges in $E$, we get a no-instance of {\sc EC-$\Pi$}. Next suppose that $(G,\!k,\!p)$ is a yes-instance, and let $X$ be a set of $k$ nodes in $V$ such that val$(X)\!\leq\!p$. Let $\widetilde{E}_r\!=\!E(X)$, and $E_b\!=\!E(X,\!V\!\setminus\! X)$. Also, choose a minimum-size subset $E_r\!\subseteq\!\widetilde{E}_r$ such that the graphs $G_r'\!=\!(X,\widetilde{E}_r)$ and $G_r\!=\!(X,E_r)$ contain the same set of maximal connected components. Let $E_r'$ and $E_b'$ denote the indices of the edges in $E_r$ and $E_b$, respectively. Note that $|E_r'|\!+\!|E_b'|\!\leq\!x$. Since ${\cal F}$ is an $(|E|,x)$-universal set, there is $f\!\in\!{\cal F}$ such that: (1) for all $i\!\in\!E_r'$, $f(i)\!=\!0$, and (2) for all $i\!\in\!E_b'$, $f(i)\!=\!1$. Thus, in $G_f$, the copies of the edges in $E_r$ are red, and the copies of the edges in $E_b$ are blue. Then, val$^*(X)\!=\!\mathrm{val}(X)$. We get that $(G_f,\!k,\!p)$ is a yes-instance of {\sc EC-$\Pi$}.\qed
\end{proof}
Furthermore, Lemma \ref{lemma:splitter} immediately implies the following result.

\begin{lemma}\label{lemma:coloredgestime}
Procedure \alg{ColorEdges} runs in time $O(2^{x+o(x)}|E|\!\log\!|E|)$, and returns a set ${\cal I}$ of size $O(2^{x+o(x)}\!\log\!|E|)$.
\end{lemma}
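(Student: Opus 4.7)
The plan is to derive both claims directly from Lemma~\ref{lemma:splitter} applied to the single call \alg{UniSetAlg}$(|E|, x)$ made in Step~\ref{step:coloredges2} of \alg{ColorEdges}, followed by a straightforward accounting of the remaining work in the procedure.

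First, I would instantiate Lemma~\ref{lemma:splitter} with $n = |E|$ and $t = x$. This yields an $(|E|, x)$-universal set ${\cal F}$ of cardinality $2^{x + o(x)} \log |E|$ in time $O(2^{x + o(x)} |E| \log |E|)$. Since \alg{ColorEdges} outputs exactly one instance $(G_f, k, p)$ per $f \in {\cal F}$, the size bound $|{\cal I}| = O(2^{x + o(x)} \log |E|)$ is immediate from the cardinality of ${\cal F}$.

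For the running time, I would observe that the only work done beyond the call to \alg{UniSetAlg} is the outer loop over ${\cal F}$: for each $f$, the procedure creates a copy $E^f$ of the edge set and reads off the color of each of the $|E|$ edges from $f$, which costs $O(|E|)$ per iteration. Summing over the $|{\cal F}| = 2^{x + o(x)} \log |E|$ iterations gives a total of $O(2^{x + o(x)} |E| \log |E|)$ for the loop, which matches (and is absorbed into) the cost of constructing ${\cal F}$. Adding both contributions yields the claimed $O(2^{x + o(x)} |E| \log |E|)$ bound on the overall running time.

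No substantive obstacle is expected here: the argument is pure book-keeping on top of Lemma~\ref{lemma:splitter}, and mirrors precisely the structure of the analogous Lemma~\ref{lemma:colornodestime} established for \alg{ColorNodes}. The only point worth making explicit is that the gain relative to the node-coloring variant is captured entirely by the choice of the universality parameter $x$ (rather than $k + p/\alpha_2$), so no new combinatorial estimate enters the analysis of this lemma.
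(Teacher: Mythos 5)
Your proposal is correct and matches the paper exactly: the paper simply states that Lemma~\ref{lemma:splitter} (applied with $n=|E|$, $t=x$) immediately implies the result, and your accounting of the $O(|E|)$ work per function $f\in{\cal F}$ is the same routine book-keeping left implicit there.
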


\mysubsection{A Procedure for {\sc EC-$\Pi$}}\label{section:ecnewp}

By modifying the procedure given in Section \ref{section:ncpos}, we get a procedure, called \alg{SolveECP}, satisfying the following result (see Appendix \ref{app:ecnewp}).

\begin{lemma}\label{lemma:ecnewp}
Procedure \alg{SolveECP} solves {\sc EC-$\Pi$} in time $O(|V|k\!+\!|E|)$.
\end{lemma}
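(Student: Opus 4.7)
The plan is to adapt procedure \alg{SolveNCP} from Section \ref{section:ncpos} by replacing the role of ``maximal connected red node-components'' with ``maximal connected components of the red edge-subgraph''. The key structural observation is that if $X\!\subseteq\!V$ is any node set such that every edge in $E(X,V\!\setminus\!X)$ is blue, then $X$ must be a disjoint union of node-sets of maximal connected components of $G_r\!=\!(V,E_r)$, where $E_r$ is the set of red edges of $G$. Indeed, if some connected component of $G_r$ intersected both $X$ and $V\!\setminus\!X$, it would contain a red edge across the cut, contradicting the hypothesis; conversely, if $X$ is a union of such components then maximality forces every cut edge to be blue. Moreover, when $X = C_{i_1}\cup\cdots\cup C_{i_s}$ for such components, the family $\mathrm{C}(X)$ from the definition of $\mathrm{val}^*$ is exactly $\{C_{i_1},\ldots,C_{i_s}\}$, and hence $\mathrm{val}^*(X)=\sum_{j=1}^s\mathrm{val}(C_{i_j})$.

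Given this observation, procedure \alg{SolveECP} proceeds exactly like \alg{SolveNCP}, with the following change to Step~\ref{step:ncp1}: use DFS on $G_r$ to compute the family ${\cal C}\!=\!\{C_1,\ldots,C_t\}$ of node-sets of the maximal connected components of $G_r$, and precompute $\mathrm{val}(C_i)$ for each $i$. Then construct the matrix $\mathrm{M}$ with entries $[i,j]$ for $0\!\leq\! i\!\leq\! t$ and $0\!\leq\! j\!\leq\! k$, initialize $\mathrm{M}[i,0]\Leftarrow 0$ and $\mathrm{M}[0,j]\Leftarrow\infty$ for $j\!\geq\!1$, and fill via the same recurrence $\mathrm{M}[i,j]\Leftarrow\min\{\mathrm{M}[i\!-\!1,j],\mathrm{M}[i\!-\!1,j\!-\!|C_i|]+\mathrm{val}(C_i)\}$. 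Accept iff $\mathrm{M}[t,k]\!\leq\! p$. Correctness of the DP follows by induction on $i$, exactly as in Lemma~\ref{lemma:ncp}: $\mathrm{M}[i,j]$ equals the minimum of $\sum_{C\in{\cal C}'}\mathrm{val}(C)$ over all ${\cal C}'\!\subseteq\!\{C_1,\ldots,C_i\}$ whose union has exactly $j$ nodes. Combined with the structural equivalence above, $(G,k,p)$ is a yes-instance of EC-$\Pi$ iff $\mathrm{M}[t,k]\!\leq\! p$.

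For the running time, the DFS on $G_r$ costs $O(|V|\!+\!|E|)$. Precomputing $\mathrm{val}(C_i)$ for all $i$ reduces to counting, for each red component, its internal edges and its cut edges in $G$; since each edge of $G$ is charged to at most two components, this can be done in $O(|V|\!+\!|E|)$ total time. Filling the matrix costs $O(tk)\!=\!O(|V|k)$ since $t\!\leq\!|V|$, giving the overall bound $O(|V|k\!+\!|E|)$. I do not expect any serious obstacle: the only subtle point is that the $\mathrm{val}(\cdot)$ values are defined with respect to $G$ (not $G_r$), so one must be careful to use the correct edge sets when precomputing them in linear time, and the structural equivalence must be applied in both directions to match the DP answer with the existence of a valid $X$.
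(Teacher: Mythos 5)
Your proposal is correct and matches the paper's own proof in essentially every respect: the same DFS-based decomposition into maximal red components, the same DP recurrence (your $\mathrm{val}(C_i)$ equals the paper's $\mathrm{val}^*(C_i)$ since each $C_i$ is red-connected), and the same running-time accounting. You actually spell out the structural equivalence between valid solutions and unions of red components more explicitly than the paper does, but this is a detail the paper leaves implicit rather than a different approach.
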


\mysubsection{A Faster Algorithm for $\Pi$}\label{section:newpalg}

Our faster algorithm for $\Pi$, 
 \alg{FastPAlg}, calls \alg{ColorEdges} to compute several instances of {\sc EC-$\Pi$}, and accepts iff \alg{SolveECP}
 accepts at least one of them (see the pseudocode below).

\renewcommand{\thealgorithm}{4}
\floatname{algorithm}{Algorithm}
\begin{algorithm}[!ht]
\caption{\alg{FastPAlg}($G\!=\!(V,E),k,p$)}
\begin{algorithmic}[1]
\STATE ${\cal I}\Leftarrow$ \alg{ColorEdges}($G,k,p$).

\FORALL{$(G',k',p')\in{\cal I}$}
	\STATE {\bf if} \alg{SolveECP}$(G',k',p')$ accepts {\bf then} {\bf accept}. {\bf end if}
\ENDFOR

\STATE {\bf reject}.
\end{algorithmic}
\end{algorithm}

By Lemmas \ref{lemma:coloredgescor}, \ref{lemma:coloredgestime} and \ref{lemma:ecnewp}, we have the following result.

\begin{theorem}
Algorithm \alg{FastPAlg} solves {\sc $\Pi$} in time $O(2^{x+o(x)}(|V|k\!+\!|E|)\!\log\!|E|)$.
\end{theorem}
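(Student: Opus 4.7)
The plan is to verify the two obligations separately: correctness and running time, both of which follow in a routine way by chaining the lemmas already established for the two subroutines \alg{ColorEdges} and \alg{SolveECP}.

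For correctness, I would argue as follows. Algorithm \alg{FastPAlg} accepts iff there exists some instance $(G',k',p')$ in the set $\mathcal{I}$ returned by \alg{ColorEdges}$(G,k,p)$ such that \alg{SolveECP}$(G',k',p')$ accepts. By Lemma \ref{lemma:ecnewp}, the latter holds iff $(G',k',p')$ is a yes-instance of \textsc{EC-$\Pi$}. Thus \alg{FastPAlg} accepts iff $\mathcal{I}$ contains at least one yes-instance of \textsc{EC-$\Pi$}, which by Lemma \ref{lemma:coloredgescor} is equivalent to $(G,k,p)$ being a yes-instance of $\Pi$.

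For the running time, by Lemma \ref{lemma:coloredgestime} the call to \alg{ColorEdges} takes $O(2^{x+o(x)}|E|\log|E|)$ time and returns a set $\mathcal{I}$ of size $O(2^{x+o(x)}\log|E|)$. The main loop iterates over $\mathcal{I}$, and each iteration runs \alg{SolveECP} on an instance whose vertex and edge sets coincide with those of $G$; by Lemma \ref{lemma:ecnewp}, this costs $O(|V|k+|E|)$. Multiplying, the loop contributes
\[
O\bigl(2^{x+o(x)}\log|E|\bigr)\cdot O\bigl(|V|k+|E|\bigr)=O\bigl(2^{x+o(x)}(|V|k+|E|)\log|E|\bigr),
\]
which dominates the cost of \alg{ColorEdges}. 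Hence the overall running time is $O(2^{x+o(x)}(|V|k+|E|)\log|E|)$, as claimed.

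I do not anticipate any real obstacle here, since both the correctness reduction and the running-time accounting are direct consequences of previously established lemmas; the only care needed is to note that the instances produced by \alg{ColorEdges} have the same vertex set $V$ and an edge multiset of size $|E|$ (only the coloring differs), so that the per-iteration bound from Lemma \ref{lemma:ecnewp} transfers without change.
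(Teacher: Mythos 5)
Your proposal is correct and matches the paper's argument exactly: the paper likewise derives the theorem directly from Lemmas \ref{lemma:coloredgescor}, \ref{lemma:coloredgestime}, and \ref{lemma:ecnewp}, with your write-up merely spelling out the chaining of correctness and the running-time product that the paper leaves implicit.
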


{\noindent Since {\sc Min $k$-Vertex Cover} satisfies $\alpha_1\!=\!\alpha_2\!=\!1$, we have the following result.}

\begin{corollary}
Algorithm \alg{FastPAlg} solves {\sc Min $k$-Vertex Cover} in time\\$O(2^{p+o(p)}(|V|k\!+\!|E|)\!\log
\!|E|)$.
\end{corollary}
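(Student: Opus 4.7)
The plan is to observe that the corollary is an immediate specialization of the preceding Theorem to the particular constants $\alpha_1 = \alpha_2 = 1$ that define \textsc{Min $k$-Vertex Cover}, and to verify that this specialization is legitimate, i.e., that the problem actually lies in the subclass to which \alg{FastPAlg} applies.

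First, I would check the admissibility condition: the algorithm \alg{FastPAlg} is stated for min-FGPPs satisfying $\alpha_2 \geq \frac{\alpha_1}{2} > 0$. With $\alpha_1 = \alpha_2 = 1$ we have $1 \geq \frac{1}{2} > 0$, so \textsc{Min $k$-Vertex Cover} is a non-degrading positive min-FGPP, and Theorem applies. It is also worth noting that $\mathrm{val}(X) = |E(X)| + |E(X, V \setminus X)|$ indeed counts every edge incident to $X$ exactly once, which is the standard vertex cover objective; hence finding $X$ of size $k$ with $\mathrm{val}(X) \leq p$ is exactly the \textsc{Min $k$-Vertex Cover} question.

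Next I would evaluate the exponent $x = \max\{\frac{p}{\alpha_2}, \min\{\frac{p}{\alpha_1}, \frac{p}{\alpha_2} + (1 - \frac{\alpha_1}{\alpha_2})k\}\}$ at $\alpha_1 = \alpha_2 = 1$. All three quantities inside the $\max$ and $\min$ collapse to $p$: namely $\frac{p}{\alpha_2} = p$, $\frac{p}{\alpha_1} = p$, and $\frac{p}{\alpha_2} + (1 - \frac{\alpha_1}{\alpha_2})k = p + 0 \cdot k = p$. Hence $x = p$.

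Substituting $x = p$ into the running time $O(2^{x + o(x)}(|V|k + |E|) \log|E|)$ guaranteed by the Theorem yields the claimed bound $O(2^{p + o(p)}(|V|k + |E|)\log|E|)$ for \alg{FastPAlg} on \textsc{Min $k$-Vertex Cover}. There is no genuine obstacle here; the only step requiring care is the algebraic verification that the three terms in the definition of $x$ all equal $p$ when $\alpha_1 = \alpha_2$, so that neither the outer $\max$ nor the inner $\min$ activates a larger exponent.
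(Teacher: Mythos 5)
Your proposal is correct and matches the paper's (implicit) argument exactly: the paper derives the corollary simply by noting that {\sc Min $k$-Vertex Cover} has $\alpha_1\!=\!\alpha_2\!=\!1$, which satisfies $\alpha_2\!\geq\!\frac{\alpha_1}{2}\!>\!0$ and collapses $x$ to $p$ in the preceding theorem. Your additional checks (admissibility and that all three terms in the definition of $x$ equal $p$) are just a more explicit write-up of the same specialization.
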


\bibliographystyle{splncs03}
\bibliography{references}

\begin{thebibliography}{10}
\providecommand{\url}[1]{\texttt{#1}}
\providecommand{\urlprefix}{URL }

\bibitem{B06}
Berkhin, P.: A survey of clustering data mining techniques. Grouping
  Multidimensional Data Recent Advances in Clustering, Eds. J. Kogan and C.
  Nicholas and M. Teboulle pp. 25--71 (2006)

\bibitem{localpartition}
Bonnet, E., Escoffier, B., Paschos, V.T., Tourniaire, E.: Multi-parameter
  complexity analysis for constrained size graph problems: using greediness for
  parameterization. In: IPEC. pp. 66--77 (2013)

\bibitem{kdensest}
Bourgeois, N., Giannakos, A., Lucarelli, G., Milis, I., Paschos, V.T.: Exact
  and approximation algorithms for densest $k$-subgraph. In: IPEC. pp. 66--77
  (2013)

\bibitem{cai08}
Cai, L.: Parameterized complexity of cardinality constrained optimization
  problemss. Comput. J.  51(1),  102�--121 (2008)

\bibitem{randsep}
Cai, L., Chan, S.M., Chan, S.O.: Random separation: A new method for solving
  fixed-cardinality optimization problems. In: IPEC. pp. 239--250 (2006)

\bibitem{bisection}
Cygan, M., Lokshtanov, D., Pilipczuk, M., Pilipczuk, M., Saurabh, S.: Minimum
  bisection is fixed parameter tractable. In: STOC (2014, to appear)

\bibitem{DRLJ10}
Donavalli, A., Rege, M., Liu, X., Jafari-Khouzani, K.: Low-rank matrix
  factorization and co-clustering algorithms for analyzing large data sets. In:
  ICDEM. pp. 272--279 (2010)

\bibitem{downey03}
Downey, R.G., Estivill-Castro, V., Fellows, M.R., Prieto, E., Rosamond, F.A.:
  Cutting up is hard to do: the parameterized complexity of $k$-cut and related
  problems. Electr. Notes Theor. Comput. Sci.  78,  209--222 (2003)

\bibitem{downey95}
Downey, R.G., Fellows, M.R.: Fixed-parameter tractability and completeness
  $\mathrm{II}$: on completeness for $\mathrm{W}$[1]. Theor. Comput. Sci.
  141(1\&2),  109--131 (1995)

\bibitem{representative}
Fomin, F.V., Lokshtanov, D., Saurabh, S.: Efficient computation of
  representative sets with applications in parameterized and exact agorithms.
  In: SODA. pp. 142--151 (2014)

\bibitem{vcvariants}
Guo, J., Niedermeier, R., Wernicke, S.: Parameterized complexity of vertex
  cover variants. Theory Comput. Syst.  41(3),  501--520 (2007)

\bibitem{KLM11}
Kahng, A.B., Lienig, J., Markov, I.L., Hu, J.: VLSI Physical Design - From
  Graph Partitioning to Timing Closure. Springer (2011)

\bibitem{kloks94}
Kloks, T.: Treewidth, computations and approximations. LNCS 842, Springer
  (1994)

\bibitem{partialvc}
Kneis, J., Langer, A., Rossmanith, P.: Improved upper bounds for partial vertex
  cover. In: WG. pp. 240--251 (2008)

\bibitem{enuconcom}
Komusiewicz, C., Sorge, M.: Finding dense subgraphs of sparse graphs. In: IPEC.
  pp. 242--251 (2012)

\bibitem{splitter}
Naor, M., Schulman, L.J., Srinivasan, A.: Splitters and near-optimal
  derandomization. In: FOCS. pp. 182--191 (1995)

\bibitem{corrrepresentative}
Shachnai, H., Zehavi, M.: Faster computation of representative families for
  uniform matroids with applications. CoRR abs/1402.3547  (2014)

\bibitem{corrmatchpack}
Zehavi, M.: Deterministic parameterized algorithms for matching and packing
  problems. CoRR abs/1311.0484  (2013)

\end{thebibliography}

\newpage

\appendix

\mysection{An Illustration of the Reduction $f$}\label{app:fig1}

\begin{figure}
\medskip
\centering
\includegraphics[scale=0.5]{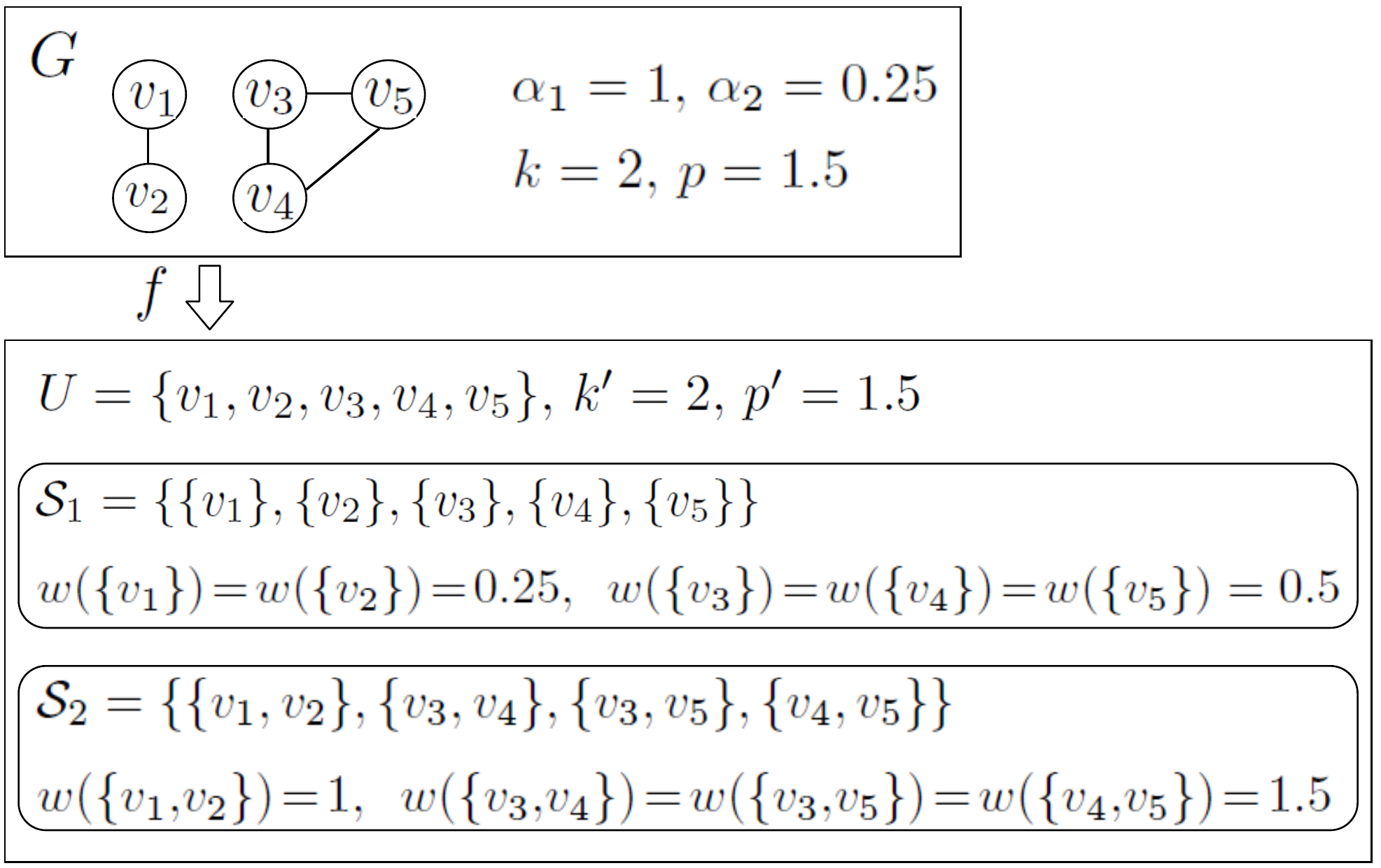}
\caption{An illustration of the reduction $f$, given in Section \ref{section:reduce}.}
\label{fig:1}
\end{figure}

\mysection{A Procedure for {\sc EC-$\Pi$} (Cont.)}\label{app:ecnewp}

We now present the details of procedure \alg{SolveECP} (see the pseudocode below). Procedure \alg{SolveECP} first computes the node-sets of the maximal connected components in the graph obtained by removing all the blue edges from $G$. Then, procedure \alg{SolveECP} generates a matrix M, where each entry $[i,j]$ holds the minimum value val$^*(X)$ of a subset $X\!\subseteq\! V$ in $Sol_{i,j}$, the family containing  every set of exactly $j$ nodes in $V$ obtained by choosing a union of sets in $\{C_1,C_2\ldots,C_i\}$, i.e., $Sol_{i,j}\!=\!\{(\bigcup{\cal C}'): {\cal C}'\subseteq \{C_1,C_2,\ldots,C_i\},|\bigcup{\cal C}'|=j\}$. Procedure \alg{SolveNCP} computes M by using dynamic programming, assuming an access to a non-existing entry returns $\infty$, and accepts iff M$[t,k]\!\leq\!p$.

\renewcommand{\thealgorithm}{}
\floatname{algorithm}{Procedure}
\begin{algorithm}[!ht]
\caption{\alg{SolveECP}($G\!=\!(V,E),k,p$)}
\begin{algorithmic}[1]
\STATE\label{step:ecnewp1} use DFS to compute the family ${\cal C}\!=\!\{C_1,C_2,\ldots,C_t\}$, for some $0\!\leq\! t\!\leq\! |V|$, of the node-sets of the maximal connected components in the graph obtained by removing all the blue edges from $G$.

\STATE let M be a matrix containing an entry $[i,j]$ for all $0\!\leq\! i\!\leq\! t$ and $0\!\leq\! j\!\leq\! k$.
\STATE initialize M$[i,0]\Leftarrow 0$ for all $0\!\leq\!i\!\leq\! t$, and M$[0,j]\Leftarrow\infty$ for all $1\!\leq\!j\!\leq\!k$.

\FOR{$i\!=\!1,2,\ldots,t,$ and $j\!=\!1,2,\ldots,k$}
	\STATE M$[i,j]\Leftarrow\min\{\mathrm{M}[i\!-\!1,j],\mathrm{M}[i\!-\!1,j\!-\!|C_i|]+\mathrm{val}^*(C_i)\}$.
\ENDFOR

\STATE {\bf accept} iff M$[t,k]\!\leq\!p$.
\end{algorithmic}
\end{algorithm}

{\noindent We next prove the correctness of Lemma \ref{lemma:ecnewp}.}

\begin{proof}
For all $0\!\leq\! i\!\leq\! t$ and $0\!\leq\! j\!\leq\! k$, denote val$(i,\!j)\!=\!\min_{X\in Sol_{i,j}}\{\mathrm{val}^*(X)\}$. Using a simple induction on the computation of M, we get that M$[i,\!j]\!=\!\mathrm{val}(i,\!j)$. Since $(G,\!k,\!p)$ is a yes-instance of {\sc EC-$\Pi$} iff val$(t,\!k)\!\leq\! p$, we have that \alg{SolveECP} is correct. Step \ref{step:ecnewp1}, and the computation of val$^*(C)$ for all $C\!\in\!{\cal C}$, are performed in time $O(|V|\!+\!|E|)$. Since M is computed in time $O(|V|k)$, we have that \alg{SolveECP} runs in time $O(|V|k\!+\!|E|)$.\qed
\end{proof}

\mysection{Some Proofs}\label{app:someproofs}
\noindent
{\bf Proof of lemma \ref{lemma:decreasecor}:}
First, assume that $(U,\!{\cal S},\!w,\!k',\!p')$ is a yes-instance. Let ${\cal S}'$ be 
a subfamily of disjoint sets from ${\cal S}$, such that $|\bigcup{\cal S}'|\!=\!k'$, $\sum_{S\in{\cal S}'}w(S)\!\geq\!p'$, 
and there is no subfamily ${\cal S}''$ satisfying these conditions, and
 $|{\cal S}'\!\cap\!\widehat{\cal S}|\!<\!|{\cal S}''\!\cap\!\widehat{\cal S}|$. Suppose, by way of
 contradiction, that there is a set $S\!\in\!({\cal S}_i\cap{\cal S}')\!\setminus\!\widehat{\cal S}$,
 for some $1\!\leq\!i\!\leq\!k'$. By Theorem \ref{theorem:rep}, there is a set $\widehat{S}\!\in\!\widehat{\cal S}_i$
 such that $w(\widehat{S})\!\geq\!w(S)$, and $\widehat{S}\!\cap\!S'\!=\!\emptyset$,
 for all $S'\!\in\!{\cal S}'\!\setminus\!\{S\}$. Thus, ${\cal S}''\!=\!({\cal S}'\!\setminus\!\{S\})\!\cup\!\{\widehat{S}\}$
 is a solution to $(U,\!{\cal S},\!w,\!k',\!p')$. Since $|{\cal S}'\!\cap\!\widehat{\cal S}|\!<\!|{\cal S}''\!\cap\!\widehat{\cal S}|$,
 this is a contradiction.

Now, assume that $(U,\!\widehat{\cal S},\!w,\!k',\!p')$ is a yes-instance. Since $\widehat{\cal S}\!\subseteq\!{\cal S}$, 
we immediately get that $(U,\!{\cal S},\!w,\!k',\!p')$ is also a yes-instance.\qed

\end{document}